\documentclass[runningheads]{llncs}
\usepackage[T1]{fontenc}
\usepackage{graphicx}
\usepackage{amsmath,amssymb}
\usepackage{multirow}
\usepackage{float}

\newcommand{\MC}{\operatorname{MC}}
\newcommand{\SC}{\operatorname{SC}}
\newcommand{\OPT}{\operatorname{OPT}}
\newcommand{\med}{\operatorname{med}}
\newcommand{\R}{\mathbb{R}}

\newcommand{\Us}{\scriptscriptstyle\mathrm{U}}
\newcommand{\Ls}{\scriptscriptstyle\mathrm{L}}

\begin{document}

\title{Mechanism Design for Bridge Location with Optional Preferences}
\titlerunning{Mechanism Design for Bridge Location with Optional Preferences}
\author{Xiaoshuang Geng\inst{1} \and Wenjing Liu\inst{1,2}\orcidID{0000-0003-4826-2088} \and Genjie Qin\inst{1} \and Qizhi~Fang\inst{1}}
\authorrunning{X. Geng et al.}
\institute{School of Mathematical Sciences, Ocean University of China, Qingdao, 266100, China
\and Laboratory of Marine Mathematics, Ocean University of China, Qingdao, 266100, China\\
\email{gxs15063458497@163.com, \{liuwj,qfang\}@ouc.edu.cn, qgj@stu.ouc.edu.cn}}
\maketitle

\begin{abstract}
We study the bridge location problem with optional preferences, where two separated regions each contain one prelocated facility. Each agent has a private location and a private preference specifying a nonempty subset of the two facilities in which she is interested. Her individual cost is measured by one of three natural variants: the maximum, the sum, or the minimum of her distances to the facilities in which she is interested. The social planner must design deterministic strategyproof mechanisms that elicit truthful reports, choose the location of a connecting bridge, and approximately minimize either the social cost or the maximum cost.

Our main results are as follows. For the social cost objective, we design optimal mechanisms for the max-variant and sum-variant costs, and provide a 3-approximation mechanism for the min-variant cost, and establish a lower bound of 2 for the min-variant. For the maximum cost objective, we give 5/3-approximation mechanisms for the max-variant and sum-variant costs, and a 3-approximation mechanism for the min-variant cost; we also prove a common lower bound of 5/3 that applies to all three cost variants under the maximum cost objective.
\keywords{Facility location \and mechanism design \and optional preferences \and strategyproofness \and approximation ratio}
\end{abstract}

\section{Introduction}

Facility location is a fundamental optimization problem that involves selecting facility sites under given constraints to achieve a social objective. In many practical settings, agents have private information, such as their locations or preferences over facilities, which is not directly known to the social planner. The planner must therefore make decisions based on the information reported by the agents. Since the outcome affects the agents' individual costs, an agent may have an incentive to strategically misreport her private information. The challenge is to design mechanisms that optimize a social objective and prevent agents from benefiting from misreporting their private information. To address this, Procaccia and Tennenholtz~\cite{procaccia2013amd} introduced approximate mechanism design without money, where the approximation ratio quantifies the performance loss due to strategyproofness. One important class of such problems involves locating a bridge that connects two separated regions.

In many practical settings, two regions are separated by a natural or man-made obstacle, such as a river or a highway. Agents who need to access public facilities on the opposite side face long detours. Building a bridge can improve accessibility, but different bridge locations result in  different travel costs, making site selection a critical optimization problem. Prior work has studied bridge location with prelocated facilities. Qin et al.~\cite{qin2024bridges} considered the case where each region has one facility and every agent must cross the bridge. Qin et al.~\cite{locatingbridge2025} relaxed this assumption by allowing each agent to choose between the facility in her own region and the one on the opposite side. However, both models assume that the two facilities provide identical services and fail to capture scenarios where facilities provide distinct services and agents may be interested in either one of the two facilities or both. 

In this paper, we study the bridge location problem with optional preferences over the two facilities, where each agent's preference specifies the subset of facilities in which she is interested. For agents who are interested in both facilities, their individual cost can be measured in different ways depending on the practical context. We consider three natural variants: under the max-variant, an agent's cost is the maximum of her distances to the facilities in which she is interested; under the sum-variant, it is the total distance; under the min-variant, it is the minimum distance. For each variant, we study both the social cost and the maximum cost, and aim to design strategyproof mechanisms with good approximation ratios.

These three individual cost variants arise naturally in different practical settings. For the max-variant individual cost, suppose that one region has a warehouse for raw materials and the other region has a warehouse for packaging materials. A firm may need both materials to complete an order, or only one of the two materials for production. If the firm needs both materials, then the waiting time is determined by the longer of the two travel times; otherwise, its cost is the distance to the corresponding warehouse. For the sum-variant individual cost, suppose that one region has an elementary school and the other region has a kindergarten. An agent may need to take children to both facilities every morning or to only one of them. If both facilities are needed, her cost is naturally measured by the sum of the two travel distances; otherwise, her cost is the distance to the corresponding facility. For the min-variant individual cost, suppose that each region has a subway station. An agent may be able to reach her destination through both subway stations, or only through one of the two stations. If both stations are available to her, then she will choose the subway station with the shorter travel distance; otherwise, her cost is the distance to the available station.

\subsection{Related Work}

Mechanism design without money for facility location games has been widely studied in recent years. Moulin~\cite{moulin1980strategyproofness} presented a complete characterization of strategyproof mechanisms when agents are located on the line and have single-peaked preferences. Procaccia and Tennenholtz~\cite{procaccia2013amd} first introduced approximate mechanism design without money in the context of facility-location problems. Lu et al.~\cite{lu2009tighter} further studied the two-facility location problem and improved the lower bound for deterministic strategyproof mechanisms under the social cost objective. Lu et al.~\cite{lu2010asymptotically} studied two-facility location games in general metric spaces. These works provide the general background for approximate mechanism design without money.

\par\vspace{1pt}\noindent\textbf{Facility Location Games with Optional Preferences.} Serafino and Ventre~\cite{serafino2016heterogeneous} introduced heterogeneous facility location games with optional preferences, where agents have preferences over facilities and the cost of each agent is the sum of the distances to the facilities she is interested in. Chen et al.~\cite{chen2020optional} extended this model to the continuous-line setting and studied the min-variant and max-variant individual costs. Subsequent work has studied different cost variants and settings, including the min-variant cost~\cite{li2020constant}, the sum-variant cost~\cite{kanellopoulos2023discrete,kanellopoulos2025candidate}, and the max-variant cost with candidate locations~\cite{zhao2023maxvariant,lotfi2024truthful}. Cheng et al.~\cite{cheng2013obnoxious} discussed obnoxious facility location games, where each agent hopes that the facility will be built as far away from her as possible. Kanellopoulos and Voudouris~\cite{kanellopoulos2024obnoxious} later studied obnoxious facilities with optional preferences.

\par\vspace{1pt}\noindent\textbf{Facility Location Games with Prelocated Facilities.} Facility location games with prelocated facilities have been studied in various settings. Chan and Wang~\cite{chan2023accessibility} and subsequent works~\cite{chan2024reducing,chan2025disruptions,chan2024extending} explored the addition of shortcuts and bridges to improve connectivity or access to existing facilities on a line or interval. Qin et al.~\cite{qin2024prelocated} studied locating a new facility alongside a prelocated one on a line. The two works most closely related to ours are the bridge-location models of Qin et al.~\cite{qin2024bridges} and Qin et al.~\cite{locatingbridge2025}, which were briefly introduced above. However, in both models, the facility accessed by an agent is determined uniformly: agents are either required to use the facility on the opposite side~\cite{qin2024bridges} or allowed to use whichever facility is closer ~\cite{locatingbridge2025}. They do not address scenarios where the facilities provide distinct services and agents may be interested in either one of the two facilities or both, which is exactly the optional preference setting we study in this paper.

\subsection{Our Contributions}

In this paper, we study a mechanism design problem for locating a bridge between two separated regions. Each region has one prelocated facility, and agents have optional preferences over the two facilities. Our goal is to design deterministic strategyproof mechanisms that incentivize agents to report truthfully and achieve good approximation ratios for the social objectives. We study three individual cost variants, namely the max-variant, the sum-variant, and the min-variant. As social objectives, we consider the social cost and the maximum cost. We focus on the information structure where both locations and preferences are private. The results are summarized in Table~\ref{tab:summary-results}.

\begin{table}[H]
\caption{A summary of our results.}\label{tab:summary-results}
\centering
\scriptsize
\setlength{\tabcolsep}{2.5pt}
\renewcommand{\arraystretch}{1.12}
\begin{tabular}{|c|c|c|c|c|}
\hline
Social objective & Information structure & Individual cost & Upper bound & Lower bound\\
\hline
\multirow{3}{*}{Social cost} & \multirow{3}{*}{\begin{tabular}{@{}c@{}}Private locations,\\private preferences\end{tabular}} & max-variant & $1$ (Theorem~\ref{appstmt:1}) & $1$\\
 &  & sum-variant & $1$ (Theorem~\ref{appstmt:3}) & $1$\\
 &  & min-variant & $3$ (Theorem~\ref{appstmt:5}) & $2$ (Theorem~\ref{thm:sc-min-both-lb2})\\
\hline
\multirow{3}{*}{Maximum cost} & \multirow{3}{*}{\begin{tabular}{@{}c@{}}Private locations,\\private preferences\end{tabular}} & max-variant & $5/3$ (Theorem~\ref{appstmt:7}) & $5/3$ (Theorem~\ref{appstmt:9})\\
 &  & sum-variant & $5/3$ (Theorem~\ref{appstmt:10}) & $5/3$ (Theorem~\ref{appstmt:9})\\
 &  & min-variant & $3$ (Theorem~\ref{appstmt:12}) & $5/3$ (Theorem~\ref{appstmt:9})\\
\hline
\end{tabular}
\end{table}

\section{Preliminaries}\label{sec:model}

Consider two parallel lines\footnote{This assumption is only used to ensure that the two regions do not intersect in the plane.}, denoted by $L_1$ and $L_2$, which represent two regions separated by an obstacle. There are two facilities $F_1$ and $F_2$, prelocated at $x_F$ on $L_1$ and $y_F$ on $L_2$, respectively. A set of agents $M=\{1,\ldots,m\}$ is located on $L_1$, and a set of agents $N=\{1,\ldots,n\}$ is located on $L_2$. Each agent $i\in M$ has a location $x_i\in\mathbb{R}$ on $L_1$, and each agent $j\in N$ has a location $y_j\in\mathbb{R}$ on $L_2$. Let $\mathbf x=(x_1,\ldots,x_m)\in\mathbb{R}^m$ denote the location profile of the agents in $M$, and let $\mathbf y=(y_1,\ldots,y_n)\in\mathbb{R}^n$ denote the location profile of the agents in $N$. Each agent has optional preferences over the two facilities, meaning that she may be interested in either one facility or both facilities. For each agent $i\in M$, let $\mathbf p_i^{\mathrm U}=(p_{i,1}^{\mathrm U},p_{i,2}^{\mathrm U})\in\{0,1\}^2$ denote her facility preference vector, where $p_{i,1}^{\mathrm U}+p_{i,2}^{\mathrm U}\ge 1$. Here, $p_{i,k}^{\mathrm U}=1$ means that agent $i$ is interested in facility $F_k$ for $k\in\{1,2\}$. Similarly, let $\mathbf p_j^{\mathrm L}$ denote the facility preference vector of agent $j\in N$. Let $\mathbf p^{\mathrm U}=(\mathbf p_1^{\mathrm U},\ldots,\mathbf p_m^{\mathrm U})$ and $\mathbf p^{\mathrm L}=(\mathbf p_1^{\mathrm L},\ldots,\mathbf p_n^{\mathrm L})$ denote the facility preference profiles of agents in $M$ and $N$, respectively. An instance is denoted by $\mathbf c=(\mathbf x,\mathbf y,\mathbf p^{\mathrm U},\mathbf p^{\mathrm L})$.

To connect the two regions, we aim to build a bridge with one endpoint on $L_1$ and the other endpoint on $L_2$. For notational simplicity, we assume that the bridge is perpendicular to both parallel lines, so its location can be described by a single coordinate $s$, with one endpoint at $s$ on $L_1$ and the other at $s$ on $L_2$. We assume that the bridge itself has zero traversal cost.\footnote{Since the two parallel lines are merely abstract representations of two
disconnected regions, the distance between them has no physical meaning. It is therefore difficult to define a reasonable traversal cost for the bridge.}

\subsection*{Individual costs}

For each bridge position $s$, we define the travel costs of every agent to the two facilities as follows. For each agent $i\in M$, let $d_{i1}^{\Us}(s)=|x_i-x_F|$ denote her local travel cost to $F_1$, and let $d_{i2}^{\Us}(s)=|x_i-s|+|s-y_F|$ denote her crossing travel cost to $F_2$ through the bridge. Similarly, for each agent $j\in N$, let $d_{j1}^{\Ls}(s)=|y_j-s|+|s-x_F|$ and $d_{j2}^{\Ls}(s)=|y_j-y_F|$ denote her travel costs to $F_1$ and $F_2$, respectively. For an instance $\mathbf c$ and a bridge position $s$, we consider the following three variants of individual cost.

\begin{enumerate}
\renewcommand{\labelenumi}{(\arabic{enumi})}
\item \begin{samepage}
\textbf{Max-variant}: each agent's cost is the maximum of her distances to all facilities in which she is interested,
\begin{align*}
C_i^{\Us}((x_i,\mathbf p_i^{\Us}),s)
&=\max_{k:\,p_{i,k}^{\Us}=1} d_{ik}^{\Us}(s),\quad i\in M,\\
C_j^{\Ls}((y_j,\mathbf p_j^{\Ls}),s)
&=\max_{k:\,p_{j,k}^{\Ls}=1} d_{jk}^{\Ls}(s),\quad j\in N.
\end{align*}
\end{samepage}

\item \begin{samepage}
\textbf{Sum-variant}: each agent's cost is the sum of her distances to all facilities in which she is interested,
\begin{align*}
C_i^{\Us}((x_i,\mathbf p_i^{\Us}),s)
&=\sum_{k:\,p_{i,k}^{\Us}=1} d_{ik}^{\Us}(s),\quad i\in M,\\
C_j^{\Ls}((y_j,\mathbf p_j^{\Ls}),s)
&=\sum_{k:\,p_{j,k}^{\Ls}=1} d_{jk}^{\Ls}(s),\quad j\in N.
\end{align*}
\end{samepage}

\item \begin{samepage}
\textbf{Min-variant}: each agent's cost is the minimum of her distances to all facilities in which she is interested,
\begin{align*}
C_i^{\Us}((x_i,\mathbf p_i^{\Us}),s)
&=\min_{k:\,p_{i,k}^{\Us}=1} d_{ik}^{\Us}(s),\quad i\in M,\\
C_j^{\Ls}((y_j,\mathbf p_j^{\Ls}),s)
&=\min_{k:\,p_{j,k}^{\Ls}=1} d_{jk}^{\Ls}(s),\quad j\in N.
\end{align*}
\end{samepage}
\end{enumerate}
\subsection*{Social objectives}

We study two social objective functions: the social cost, which is the total cost of all agents, and the maximum cost, which is the largest individual cost among all agents. Given an instance $\mathbf c$ and a bridge position $s$, the social cost and the maximum cost are defined as follows.
\begin{align*}
\SC(\mathbf c,s)
&=\sum_{i\in M}C_i^{\Us}((x_i,\mathbf p_i^{\Us}),s)
+\sum_{j\in N}C_j^{\Ls}((y_j,\mathbf p_j^{\Ls}),s),\\
\MC(\mathbf c,s)
&=\max\left\{\max_{i\in M}C_i^{\Us}((x_i,\mathbf p_i^{\Us}),s),
\max_{j\in N}C_j^{\Ls}((y_j,\mathbf p_j^{\Ls}),s)\right\}.
\end{align*}
\par\indent A deterministic mechanism is a function $f:\mathbb{R}^m\times\mathbb{R}^n\times\{0,1\}^{2m}\times\{0,1\}^{2n}\to\mathbb{R}$ that maps the reported locations and preferences of all agents to a bridge position. For an instance $\mathbf c=(\mathbf x,\mathbf y,\mathbf p^{\Us},\mathbf p^{\Ls})$, the bridge position output by the mechanism is denoted by $f(\mathbf c)$.

In this paper, we focus primarily on mechanism design in the setting where both locations and preferences are private. We restrict each agent's location report to her own line. Under this information structure, a mechanism is strategyproof if no agent can decrease her cost by misreporting her private information, regardless of the reports of the other agents. Formally, for any instance $\mathbf c=(\mathbf x,\mathbf y,\mathbf p^{\Us},\mathbf p^{\Ls})$, the following two conditions hold:
\begin{enumerate}
\item $\forall i\in M$, $\forall x_i'\in\R$, and $\forall (\mathbf p_i^{\Us})'\in\{0,1\}^2$ with $(p_{i,1}^{\Us})'+(p_{i,2}^{\Us})'\ge1$,
\begin{align*}
C_i^{\Us}((x_i,\mathbf p_i^{\Us}),f(\mathbf{x},\mathbf{y},\mathbf{p}^{\Us},\mathbf{p}^{\Ls}))
&\le
C_i^{\Us}((x_i,\mathbf p_i^{\Us}),f((x_i',\mathbf{x}_{-i}),\mathbf{y},((\mathbf p_i^{\Us})',\mathbf{p}_{-i}^{\Us}),\mathbf{p}^{\Ls})),
\end{align*}
where $\mathbf{x}_{-i}$ and $\mathbf{p}_{-i}^{\Us}$ denote the location profile and preference profile of agents in $M\setminus\{i\}$;
\item $\forall j\in N$, $\forall y_j'\in\R$, and $\forall (\mathbf p_j^{\Ls})'\in\{0,1\}^2$ with $(p_{j,1}^{\Ls})'+(p_{j,2}^{\Ls})'\ge1$,
\begin{align*}
C_j^{\Ls}((y_j,\mathbf p_j^{\Ls}),f(\mathbf{x},\mathbf{y},\mathbf{p}^{\Us},\mathbf{p}^{\Ls}))
&\le
C_j^{\Ls}((y_j,\mathbf p_j^{\Ls}),f(\mathbf{x},(y_j',\mathbf{y}_{-j}),\mathbf{p}^{\Us},((\mathbf p_j^{\Ls})',\mathbf{p}_{-j}^{\Ls}))),
\end{align*}
where $\mathbf{y}_{-j}$ and $\mathbf{p}_{-j}^{\Ls}$ denote the location profile and preference profile of agents in $N\setminus\{j\}$.
\end{enumerate}
\begingroup\emergencystretch=1.2em\tolerance=2000
\par\indent For a fixed instance $\mathbf c$, we use the following simplified notation whenever there is no ambiguity. We write $C_i^{\Us}(s)$, $C_j^{\Ls}(s)$, $\SC(s)$, and $\MC(s)$ for $C_i^{\Us}((x_i,\mathbf p_i^{\Us}),s)$, $C_j^{\Ls}((y_j,\mathbf p_j^{\Ls}),s)$, $\SC(\mathbf c,s)$, and $\MC(\mathbf c,s)$, respectively. Let $\OPT_{\SC}(\mathbf c)=\min_{s\in\R}\SC(\mathbf c,s)$ and $\OPT_{\MC}(\mathbf c)=\min_{s\in\R}\MC(\mathbf c,s)$ denote the optimal social cost and the optimal maximum cost, respectively.

\par\endgroup
\par\indent A deterministic strategyproof mechanism $f$ is said to have an approximation ratio of at most $\gamma \ge 1$ under the social cost objective if, for every instance $\mathbf{c}$,
$\mathrm{SC}(\mathbf{c},f(\mathbf{c})) \le \gamma\,\mathrm{OPT}_{\mathrm{SC}}(\mathbf{c})$.
The approximation ratio under the maximum cost objective is defined analogously.

The case $x_F=y_F$ is trivial. Hence, we assume $x_F\ne y_F$. Since the two lines share a common real coordinate system, all coordinates can be translated and rescaled simultaneously. Without loss of generality, we assume $x_F=1$ and $y_F=0$ throughout the remainder of this paper.

According to the agents' preferences, we divide them into the following sets:
\[
\begin{aligned}
M_1 &= \{i\in M:\mathbf p_i^{\Us}=(1,0)\}, &
N_1 &= \{j\in N:\mathbf p_j^{\Ls}=(1,0)\},\\
M_2 &= \{i\in M:\mathbf p_i^{\Us}=(0,1)\}, &
N_2 &= \{j\in N:\mathbf p_j^{\Ls}=(0,1)\},\\
M_{12} &= \{i\in M:\mathbf p_i^{\Us}=(1,1)\}, &
N_{12} &= \{j\in N:\mathbf p_j^{\Ls}=(1,1)\}.
\end{aligned}
\]

\begin{lemma}[Optimal bridge interval]\label{lem:bridge-interval}
For each of the max-variant, sum-variant, and min-variant individual costs and under both the social cost and maximum cost objectives, there exists an optimal bridge position $s^*\in[0,1]$.
\end{lemma}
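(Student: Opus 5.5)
The plan is to show that clamping any bridge position to $[0,1]$ never increases any agent's cost. Let $\pi(s)=\min\{1,\max\{0,s\}\}$ be the projection onto $[0,1]$. The claim is that for every agent and every $s\in\R$,
\[
C_i^{\Us}(\pi(s))\le C_i^{\Us}(s)\quad\text{and}\quad C_j^{\Ls}(\pi(s))\le C_j^{\Ls}(s),
\]
in all three variants.

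Once this holds, both objectives follow. $\SC$ is a sum of individual costs and $\MC$ is a maximum of them, and both operations are monotone. Hence $\SC(\pi(s))\le\SC(s)$ and $\MC(\pi(s))\le\MC(s)$. Taking $s$ to be any optimal position, $\pi(s)\in[0,1]$ is also optimal.

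Existence of some optimal $s$ follows from continuity and coercivity. Each cost is a continuous piecewise-linear function of $s$, so $\SC$ and $\MC$ are continuous. The clamping claim reduces minimization over $\R$ to minimization over the compact set $[0,1]$. Indeed, $\inf_{\R}\SC=\inf_{[0,1]}\SC$, and the latter infimum is attained. The same argument applies to $\MC$.

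The key step is the pointwise monotonicity of the distance terms, which I will then pass through max, sum and min.

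\textbf{Local terms.} The terms $d_{i1}^{\Us}(s)=|x_i-1|$ and $d_{j2}^{\Ls}(s)=|y_j|$ do not depend on $s$, so there is nothing to check.

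\textbf{Crossing terms.} Consider $d_{i2}^{\Us}(s)=|x_i-s|+|s|$. For $s<0$, set $\pi(s)=0$. By the triangle inequality, $|x_i-0|\le|x_i-s|+|s|$, so $d_{i2}^{\Us}(0)=|x_i|\le d_{i2}^{\Us}(s)$. For $s>1$, set $\pi(s)=1$. Then
\[
d_{i2}^{\Us}(1)=|x_i-1|+1\le |x_i-s|+(s-1)+1=d_{i2}^{\Us}(s),
\]
because $|x_i-1|\le|x_i-s|+|s-1|$. The crossing term $d_{j1}^{\Ls}(s)=|y_j-s|+|s-1|$ is handled symmetrically, with the roles of $0$ and $1$ swapped.

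In words, moving the bridge from outside $[0,1]$ to the nearer endpoint shortens every route through it. This is the geometric fact that $|a-s|+|s-b|$ is minimized exactly for $s$ between $a$ and $b$, and increases as $s$ moves away from the interval $[\min(a,b),\max(a,b)]$. Projecting onto $[0,1]$ moves $s$ toward both facility coordinates $0$ and $1$. Since the projection is onto the segment joining the two facilities, it cannot move $s$ away from the interval spanned by a route's endpoints.

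\textbf{Aggregation.} Each individual cost is the max, sum or min over a nonempty set of these distance terms. Max, sum and min are coordinatewise monotone, so the pointwise inequalities transfer to every individual cost in each variant.

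I expect no real obstacle here. The only care needed is in the crossing-term inequality, where the cases $s<0$ and $s>1$ must be checked separately for both lines. The two facilities sit at different endpoints, $y_F=0$ and $x_F=1$, so the inequality for agents on $L_2$ is the mirror image of the one for agents on $L_1$ rather than an identical copy.
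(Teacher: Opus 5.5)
Your proposal is correct and follows essentially the paper's proof. You move any outside bridge position to the nearer endpoint of $[0,1]$ and use the triangle inequality to show that no crossing travel cost increases while local costs stay fixed. You then propagate this through the max/sum/min individual costs and the sum/max objectives. Your explicit continuity-and-compactness step for the existence of a minimizer is a welcome addition, since the paper leaves it implicit; note that it needs only the reduction to $[0,1]$, not coercivity.
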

\textbf{Due to space constraints, all proofs are deferred to Appendix A.}

By Lemma~\ref{lem:bridge-interval}, it suffices in the subsequent analysis to consider mechanisms
whose outputs lie in $[0,1]$.

\begin{remark}\label{rem:outside-constant}
For any bridge position $s\in[0,1]$, the costs of all agents in $M_1\cup N_2$, all agents $i\in M\setminus M_1$ with $x_i\ge 1$, and all agents $j\in N\setminus N_2$ with $y_j\le 0$ are independent of $s$, so they cannot reduce their costs by misreporting. We call these agents \textbf{bridge-independent agents}. The mechanisms presented below are therefore based primarily on the remaining agents.
\end{remark}

\section{Social Cost}

In this section, we study the design of strategyproof mechanisms under the social cost objective. We mainly consider the information structure where both locations and preferences are private. Under this information structure, we analyze three individual cost variants: the max-variant, the sum-variant, and the min-variant.

Our mechanisms are inspired by the social cost algorithm of Qin et al.~\cite{qin2024bridges}. In their model, as the bridge moves from $0$ to $1$, the social cost can be viewed as a sum of piecewise linear functions whose slopes change at agents' locations. The optimal bridge position is obtained by balancing the increasing and decreasing parts of the social cost. In our setting, the same analytical approach remains applicable, but the positions at which their cost functions change may be different from their own locations. This difference arises from the agents' optional preferences and the three individual cost variants considered in our model. Accordingly, our analysis focuses on agents in $M_2 \cup M_{12} \cup N_1 \cup N_{12}$, whose costs may vary with the bridge position. Therefore, we introduce breakpoints to describe these positions. Based on these breakpoints, we design mechanisms for the three individual cost variants.

\subsection{Max-Variant}

Let $A=M_2\cup M_{12}$ and $B=N_1\cup N_{12}$. For any $z\in\mathbb{R}$, let $z^+=\max\{z,0\}$. For each agent $i\in A$, define her breakpoint as
\[
a_i=
\begin{cases}
\med\{0,x_i,1\}, & i\in M_2,\\
\med\{1/2,x_i,1\}, & i\in M_{12}.
\end{cases}
\]
For each agent $j\in B$, define her breakpoint as
\[
b_j=
\begin{cases}
\med\{0,y_j,1\}, & j\in N_1,\\
\med\{0,y_j,1/2\}, & j\in N_{12}.
\end{cases}
\]

\begin{lemma}\label{lem:sc-max-expansion}
Under the max-variant individual cost, for every $i\in A$ and every $j\in B$, there exist $\alpha_i$ and $\beta_j$, independent of $s$, such that, for every bridge position $s\in[0,1]$, $C_i^{\Us}(s)=\alpha_i+2(s-a_i)^+$ and $C_j^{\Ls}(s)=\beta_j+2(b_j-s)^+$.
\end{lemma}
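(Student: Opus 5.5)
The plan is a direct case computation, restricted to $s\in[0,1]$ with $x_F=1$ and $y_F=0$. The crossing distance for $i\in M$ is $d_{i2}^{\Us}(s)=|x_i-s|+s$. The key elementary fact is that for $s\in[0,1]$,
\[
|x_i-s|+s=\max\{x_i,\,2s-x_i\}=x_i+2(s-x_i)^+ .
\]
Since $s\ge 0$, this also covers $x_i<0$. The function is therefore constant until $s$ reaches $x_i$ and increases with slope $2$ afterwards. Symmetrically, for $j\in N$ we have $d_{j1}^{\Ls}(s)=|y_j-s|+1-s=(1-y_j)+2(y_j-s)^+$. I will then handle the four preference classes separately.

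\textbf{Classes $M_2$ and $N_1$.} For $i\in M_2$ the cost is just $d_{i2}^{\Us}(s)$, so I need to replace $x_i$ by $a_i=\med\{0,x_i,1\}$.
\begin{itemize}
\item If $x_i\in[0,1]$, then $a_i=x_i$ and $\alpha_i=x_i$.
\item If $x_i<0$, then $(s-x_i)^+=s-x_i$, so the cost is $2s-x_i=-x_i+2(s-0)^+$, giving $\alpha_i=-x_i$ and $a_i=0$.
\item If $x_i>1$, the cost is the constant $x_i=x_i+2(s-1)^+$ on $[0,1]$, giving $a_i=1$.
\end{itemize}
The class $N_1$ is the mirror image, with $b_j=\med\{0,y_j,1\}$.

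\textbf{Class $M_{12}$, the main obstacle.} For $i\in M_{12}$ the cost is $\max\{|x_i-1|,\ x_i+2(s-x_i)^+\}$, and I must show it equals $\alpha_i+2(s-\med\{1/2,x_i,1\})^+$ on $[0,1]$.
\begin{itemize}
\item If $x_i\ge 1$, then $|x_i-1|=x_i-1<x_i$ and $(s-x_i)^+=0$, so the cost is constantly $x_i$, matching $a_i=1$.
\item If $1/2\le x_i\le1$, then $1-x_i\le x_i\le x_i+2(s-x_i)^+$, so the local term never binds. The cost is $x_i+2(s-x_i)^+$, with $a_i=x_i$.
\item If $x_i<1/2$, the crossing term equals $\max\{x_i,2s-x_i\}$, and $1-x_i>x_i$. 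So the cost is $\max\{1-x_i,2s-x_i\}$, which equals $(1-x_i)+2(s-1/2)^+$ since $2s-x_i\ge 1-x_i$ exactly when $s\ge1/2$. This gives $a_i=1/2$ and $\alpha_i=1-x_i$, and it also covers $x_i<0$.
\end{itemize}

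\textbf{Class $N_{12}$.} The cost is $\max\{|y_j|,\ (1-y_j)+2(y_j-s)^+\}$, handled symmetrically.
\begin{itemize}
\item If $y_j\le 0$, the cost is constant, matching $b_j=0$.
\item If $0\le y_j\le 1/2$, we have $y_j\le 1-y_j$, so the crossing term dominates and $b_j=y_j$.
\item If $y_j>1/2$, the cost is $\max\{y_j,\,1+y_j-2s\}=y_j+2(1/2-s)^+$, giving $b_j=1/2$. Here one checks that for $y_j>1$ the term $(y_j-s)^+=y_j-s$, so the formula still holds.
\end{itemize}

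The only delicate point is verifying the $\max$ identities at the thresholds $1/2$. This comes down to checking that the two affine pieces cross exactly at $s=1/2$ (respectively at $s=x_i$), which is routine.
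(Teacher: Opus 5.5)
Your proof is correct and follows essentially the same case analysis as the paper: split by preference class, then by where $x_i$ (resp.\ $y_j$) lies relative to $0$, $1/2$, $1$. Writing the crossing distance as $x_i+2(s-x_i)^+$ via the $\max$ identity only streamlines the paper's $s\le x_i$ versus $s>x_i$ subcases, and you also write out the $B$ side that the paper leaves as ``similar.''
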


Lemma~\ref{lem:sc-max-expansion} characterizes the individual cost functions in terms of the above breakpoints. For every agent $i\in A$, her cost remains constant until the bridge reaches $a_i$ and then increases at rate $2$. Symmetrically, for every agent $j\in B$, her cost decreases at rate $2$ until the bridge reaches $b_j$ and remains constant afterward. Based on this breakpoint structure, we define the following mechanism.

\par\smallskip
\noindent\textbf{Mechanism 1.} Given an instance $\mathbf c$, sort the multiset $K=\{a_i:i\in A\}\cup\{b_j:j\in B\}$ in nondecreasing order as $e_1\le e_2\le\cdots\le e_{|A|+|B|}$, and output
\[
s=
\begin{cases}
0, & |B|=0,\\
e_{|B|}, & |B|>0.
\end{cases}
\]

\begin{theorem}
Mechanism 1 is strategyproof and optimal for the social cost objective under the max-variant individual cost.
\label{appstmt:1}\label{appstmt:2}
\end{theorem}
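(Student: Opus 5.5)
Optimality comes first, then strategyproofness. By Lemma~\ref{lem:bridge-interval} it suffices to minimize over $s\in[0,1]$. Bridge-independent agents contribute a constant, and by Lemma~\ref{lem:sc-max-expansion} the social cost on $[0,1]$ equals
\[
\mathrm{const}+2\sum_{i\in A}(s-a_i)^+ + 2\sum_{j\in B}(b_j-s)^+ .
\]
This is a convex piecewise linear function. Its right derivative at $s$ is $2\bigl(|\{i\in A: a_i\le s\}|-|\{j\in B: b_j>s\}|\bigr)$. Each function $(s-a_i)^+$ and $(b_j-s)^+$ is a one-sided ``hinge''. So the cost equals, up to constants, $\sum_{k\in K}|s-e_k|$ plus a linear term $2(|A|-|B|)\cdot s/2$ correction. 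Cleaner: use $(s-a)^+=\tfrac12(|s-a|+s-a)$ and $(b-s)^+=\tfrac12(|s-b|+b-s)$. Then the social cost equals $\sum_{k}|s-e_k|+(|A|-|B|)s+\mathrm{const}$.

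I would instead argue directly with the derivative. At $s=e_{|B|}$, the number of points of $K$ that are $\le s$ is at least $|B|$. Hence
\[
|\{a_i\le s\}|\ge |B|-|\{b_j\le s\}| = |\{b_j>s\}|,
\]
so the right derivative is $\ge 0$. For $s'<e_{|B|}$, fewer than $|B|$ points of $K$ are $\le s'$, so the symmetric count gives a right derivative $<0$. By convexity $e_{|B|}$ is a minimizer. Since all $a_i,b_j\in[0,1]$, it lies in $[0,1]$. When $|B|=0$ the cost is nondecreasing on $[0,1]$, so $0$ is optimal.

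Strategyproofness is the main step. Bridge-independent agents have constant cost (Remark~\ref{rem:outside-constant}), and the output always lies in $[0,1]$, so they cannot gain. Consider a non-independent agent $i\in A$. Her cost is nondecreasing in $s$ and constant for $s\le a_i$, so she wants $s$ as small as possible, and in particular any $s\le a_i$ is ideal. Her possible deviations are of three kinds:
\begin{itemize}
\item changing $a_i$ to some $a_i'$ while staying in $A$;
\item moving to $B$ (reporting $(1,0)$ or $(1,1)$ on $L_1$ is impossible, since she reports on her own line; I must check which sets she can join);
\item becoming bridge-independent by reporting $(1,0)$ or $x_i'\ge1$.
\end{itemize}
A report on $L_1$ places her in $M_1$, $M_2$, or $M_{12}$. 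She therefore either stays in $A$ with some breakpoint $a_i'\in[0,1]$, or leaves $A\cup B$ entirely. Note that $|B|$ does not change in either case.

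If she stays in $A$: when $e_{|B|}\le a_i$ she is already optimal. Otherwise $a_i>e_{|B|}$, and lowering her point to $a_i'$ can only lower the $|B|$-th order statistic, hurting nothing for her cost? Actually we need that she cannot decrease $s$. Since $a_i>e_{|B|}$, her point is not among the $|B|$ smallest. Moving it anywhere above $e_{|B|}$ leaves the statistic unchanged. Moving it to $a_i'\le e_{|B|}$ pushes the statistic up, weakly. If she leaves $A$, the statistic stays the same because she was above it. The case $|B|=0$ gives $s=0$, which is already ideal for her. Agents in $B$ are handled symmetrically: they prefer large $s$, with any $s\ge b_j$ ideal, and removing or moving their point only weakly lowers $e_{|B|}$. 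The subtle case is a $B$-agent who leaves $B$, which decreases $|B|$ to $|B|-1$. Here I would verify that the new output, the $(|B|-1)$-th smallest of $K\setminus\{b_j\}$, is at most $e_{|B|}$. This is where I expect the main care to be needed; it holds because deleting one element and decreasing the index by one cannot increase the order statistic.
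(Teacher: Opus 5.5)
Your optimality argument is correct and is the paper's subdifferential argument in right-derivative form. Your treatment of $B$-agents is also sound. It is in fact more careful than the paper's ``similar'', because you notice that a $B$-agent reporting $(0,1)$ also lowers the index to $|B|-1$. When that index becomes $0$ the output is $0$, which is likewise harmless to her.

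The gap is in the strategyproofness argument for an agent $i\in A$. You correctly dispose of the case $e_{|B|}\le a_i$. But you then write ``Otherwise $a_i>e_{|B|}$'' and reason about a breakpoint lying \emph{above} the output (``her point is not among the $|B|$ smallest'', ``she was above it''), which is exactly the case you had just dismissed. The only case in which $i$ could want to deviate, $a_i<e_{|B|}$, is never analysed. In addition, your claim that moving her point to some $a_i'\le e_{|B|}$ ``pushes the statistic up'' is false. For instance, with $|B|=1$ and $K=\{b_1,a_i\}=\{0.2,0.5\}$, reporting $a_i'=0.1$ moves the output from $0.2$ down to $0.1$. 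This causes no harm there only because her cost is flat on $s\le a_i$, but the monotonicity reasoning is reversed.

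What is needed, and what the paper does, is the argument for $a_i<s_0=e_{|B|}$. In that case at most $|B|-1$ elements of $K$ lie strictly below $s_0$, and $a_i$ is one of them.
\begin{itemize}
\item If she stays in $A$ with some $a_i'\le s_0$, including any $a_i'<a_i$, there are still at most $|B|-1$ elements strictly below $s_0$ and at least $|B|$ at or below it, so $s'=s_0$.
\item If $a_i'>s_0$, or she drops $F_2$ so her breakpoint is deleted while $|B|$ is unchanged, the count strictly below $s_0$ only drops. The $|B|$ points $b_j$ remain, so the $|B|$-th smallest still exists, and it satisfies $s'\ge s_0$.
\item A report $x_i'\ge 1$ does not take her out of $A$. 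It simply sets $a_i'=1$ and falls under the previous two items.
\end{itemize}
Since $a_i<s_0\le s'$ and her cost $\alpha_i+2(s-a_i)^+$ is nondecreasing in $s$, she cannot gain.
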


\subsection{Sum-Variant}

For each agent $i\in A$, define her breakpoint as $a_i=\med\{0,x_i,1\}$. For each agent $j\in B$, define her breakpoint as $b_j=\med\{0,y_j,1\}$.

\begin{lemma}\label{lem:sc-sum-expansion}
Under the sum-variant individual cost, for every $i\in A$ and every $j\in B$, there exist $\alpha_i$ and $\beta_j$, independent of $s$, such that, for every bridge position $s\in[0,1]$, $C_i^{\Us}(s)=\alpha_i+2(s-a_i)^+$ and $C_j^{\Ls}(s)=\beta_j+2(b_j-s)^+$.
\end{lemma}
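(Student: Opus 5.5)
The plan is a direct computation. We use the normalization $x_F=1$, $y_F=0$ and restrict to $s\in[0,1]$. The key observation is that for $s\in[0,1]$ the crossing distance $|s-y_F|=s$ and $|s-x_F|=1-s$ are linear, so the only nonlinearity comes from $|x_i-s|$ or $|y_j-s|$. Under the sum-variant each agent's cost is a sum of her relevant distances, and the local distance ($|x_i-1|$ for $i\in M_{12}$, $|y_j|$ for $j\in N_{12}$) does not depend on $s$. So it suffices to analyse the single crossing term $g_i(s)=|x_i-s|+s$ for $i\in A$, and symmetrically $h_j(s)=|y_j-s|+1-s$ for $j\in B$. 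We then set $\alpha_i$ equal to the constant part plus $g_i(a_i)$, and $\beta_j$ analogously.

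For $i\in A$ we split into three cases according to $x_i$.
\begin{itemize}
\item If $x_i\le 0$, then $a_i=0$ and $g_i(s)=s-x_i+s=-x_i+2s=g_i(0)+2(s-0)^+$.
\item If $x_i\ge 1$, then $a_i=1$ and $g_i(s)=x_i-s+s=x_i$ is constant, which agrees with $2(s-1)^+=0$ on $[0,1]$.
\item If $0<x_i<1$, then $a_i=x_i$. For $s\le x_i$ we get $g_i(s)=x_i$, and for $s\ge x_i$ we get $g_i(s)=2s-x_i=x_i+2(s-x_i)$. Hence $g_i(s)=x_i+2(s-a_i)^+$.
\end{itemize}
In every case $g_i(s)=g_i(a_i)+2(s-a_i)^+$. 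Adding the $s$-independent term $|x_i-1|$ when $i\in M_{12}$ gives $C_i^{\Us}(s)=\alpha_i+2(s-a_i)^+$.

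The case $j\in B$ is the mirror image. If $y_j\ge 1$, then $b_j=1$ and $h_j(s)=y_j-s+1-s=h_j(1)+2(1-s)$. If $y_j\le 0$, then $b_j=0$ and $h_j(s)=s-y_j+1-s=1-y_j$ is constant. If $0<y_j<1$, then $h_j$ equals $1-y_j$ for $s\ge y_j$ and $1+y_j-2s=(1-y_j)+2(y_j-s)$ for $s\le y_j$. Thus $h_j(s)=h_j(b_j)+2(b_j-s)^+$, and we add $|y_j|$ when $j\in N_{12}$.

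There is no real obstacle here. The only point needing care is checking the endpoints of each case and confirming that the local term for agents interested in both facilities is truly independent of $s$. This holds for the sum-variant, unlike the max-variant, which is why the breakpoints now reduce to $\med\{0,x_i,1\}$ and $\med\{0,y_j,1\}$ regardless of preference type.
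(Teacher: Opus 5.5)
Your proposal is correct and follows essentially the same route as the paper: you separate off the $s$-independent local distance for agents in $M_{12}$ or $N_{12}$, then verify the crossing term $|x_i-s|+s$ (resp.\ $|y_j-s|+1-s$) case by case according to where $x_i$ (resp.\ $y_j$) lies relative to $0$ and $1$. The differences are only cosmetic: you handle $M_2$ and $M_{12}$ with one computation and write out the $B$ side explicitly, whereas the paper cites the max-variant lemma for $M_2$ and calls the $B$ side similar.
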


\par\smallskip
\noindent\textbf{Mechanism 2.} Given an instance $\mathbf c$, sort the multiset $K=\{a_i:i\in A\}\cup\{b_j:j\in B\}$ in nondecreasing order as $e_1\le e_2\le\cdots\le e_{|A|+|B|}$, and output
\[
s=
\begin{cases}
0, & |B|=0,\\
e_{|B|}, & |B|>0.
\end{cases}
\]

\begin{theorem}
Mechanism 2 is strategyproof and optimal for the social cost objective under the sum-variant individual cost.
\label{appstmt:3}\label{appstmt:4}
\end{theorem}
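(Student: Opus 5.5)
We prove Theorem~\ref{appstmt:3} in two parts, optimality and then strategyproofness, using Lemma~\ref{lem:sc-sum-expansion} and Lemma~\ref{lem:bridge-interval} throughout.

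\textbf{Optimality.} By Lemma~\ref{lem:bridge-interval} it suffices to minimize $\SC(s)$ over $s\in[0,1]$. By Lemma~\ref{lem:sc-sum-expansion}, on $[0,1]$ we have
\[
\SC(s)=\mathrm{const}+2\sum_{i\in A}(s-a_i)^+ + 2\sum_{j\in B}(b_j-s)^+ .
\]
The agents in $M_1\cup N_2$ contribute only to the constant. This is a convex piecewise linear function of $s$. At a point $s$ its right derivative is
\[
2\bigl(|\{i\in A: a_i\le s\}|-|\{j\in B: b_j> s\}|\bigr).
\]
Its left derivative is $2\bigl(|\{i: a_i<s\}|-|\{j: b_j\ge s\}|\bigr)$. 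We will show that $0$ lies in the subdifferential at $s=e_{|B|}$ when $|B|>0$, and at $s=0$ when $|B|=0$.

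The case $|B|=0$ is immediate, since $\SC$ is then nondecreasing. For $|B|>0$, put $s=e_{|B|}$. Let $p$ be the number of elements of $K$ that are strictly less than $s$, and $q\ge |B|$ the number that are at most $s$. We use the identity
\[
|\{i:a_i\le s\}|-|\{j:b_j>s\}| = |\{k\in K: k\le s\}|-|B| = q-|B| \ge 0.
\]
Similarly, the left derivative equals $p-|B|\le 0$, because $p\le |B|-1$. So $0$ lies between the one-sided derivatives, and $s$ minimizes $\SC$ on $[0,1]$. Since all breakpoints lie in $[0,1]$, the output is feasible, and optimality follows.

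\textbf{Strategyproofness.} By Remark~\ref{rem:outside-constant}, bridge-independent agents cannot gain. Now consider $i\in A$ whose cost actually depends on $s$. Her true cost is $\alpha_i+2(s-a_i)^+$, which is nondecreasing in $s$ and minimized on $s\le a_i$.

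A misreport can do two things: move her breakpoint, or change her membership. If she reports a preference in $M_1$, she leaves $A$. If she reports $(0,1)$ or $(1,1)$, she stays in $A$. Since an upper agent never enters $B$, $|B|$ is unchanged and the output is the $|B|$-th order statistic of $K$.

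\emph{Case $a_i\ge s$.} Her cost is already minimal.

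\emph{Case $a_i<s$.} Moving her element $a_i$ to any value $a'$, or deleting it, can only keep or raise the $|B|$-th order statistic. For deletion: removing an element strictly below the $|B|$-th order statistic shifts the order statistic up. Hence no report lowers $s$.

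The symmetric argument applies to $j\in B$. Her cost is nonincreasing in $s$, so she wants $s$ to be large. If she stays in $B$, changing $b_j$ is the standard order-statistic monotonicity argument. The delicate point is that she can report a preference placing her in $N_2$. Then both $|B|$ and $K$ lose one element, and the output becomes the $(|B|-1)$-th order statistic of $K\setminus\{b_j\}$. We must check that this is no larger than the original output.

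\emph{If $b_j\le s$.} Then $b_j\le s$, and removing an element at or below the $|B|$-th statistic while reducing the index by one yields an order statistic that is at most $s$.

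\emph{If $b_j>s$.} Removing an element above $s$ and taking the $(|B|-1)$-th statistic gives at most $s$ again.

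Here $b_j\ge s$ means she is already optimal. Finally, the case $|B|=1\to 0$ outputs $0$, which is the smallest possible value.

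The main obstacle is handling these preference misreports that change $|B|$ (and hence the index). The order-statistic bookkeeping must be done carefully, including ties among breakpoints.
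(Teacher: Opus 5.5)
Your proposal is correct and follows the paper's route. The paper reduces to the proof of Theorem~\ref{appstmt:1}: the same subdifferential argument for $\Phi$ at $e_{|B|}$ and the same order-statistic monotonicity. Your explicit check that a lower agent who drops out of $B$, thereby lowering the index, cannot raise the output fills in a case the paper only calls ``similar.''

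One harmless slip: a lower agent is already optimal when $b_j\le s$, not $b_j\ge s$. Since you show $s'\le s$ in both cases, the conclusion is unaffected.
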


\subsection{Min-Variant}

Under the min-variant individual cost, if an agent $i\in M_{12}$ satisfies $x_i\ge 1/2$, then her nearest interested facility is always $F_1$, so her cost is independent of the bridge position. Similarly, agents $j\in N_{12}$ with $y_j\le 1/2$ have costs independent of the bridge position. Therefore, the mechanism only needs to consider the following sets of agents:
\[
A_{\min}=M_2\cup\{i\in M_{12}:x_i<1/2\},\qquad
B_{\min}=N_1\cup\{j\in N_{12}:y_j>1/2\}.
\]
For each agent $i\in A_{\min}$, define her breakpoint as
\[
a_i=
\begin{cases}
\med\{0,x_i,1\}, & i\in M_2,\\
\med\{0,x_i,1/2\}, & i\in M_{12}\text{ and }x_i<1/2.
\end{cases}
\]
For each agent $j\in B_{\min}$, define her breakpoint as
\[
b_j=
\begin{cases}
\med\{0,y_j,1\}, & j\in N_1,\\
\med\{1/2,y_j,1\}, & j\in N_{12}\text{ and }y_j>1/2.
\end{cases}
\]

\begin{lemma}\label{lem:sc-min-auxiliary}
Under the min-variant individual cost, for every $i\in A_{\min}$ and every $j\in B_{\min}$, there exist $\alpha_i$ and $\beta_j$, independent of $s$, such that, for every bridge position $s\in[0,1]$, $d_{i2}^{\Us}(s)=\alpha_i+2(s-a_i)^+$ and $d_{j1}^{\Ls}(s)=\beta_j+2(b_j-s)^+$.
\end{lemma}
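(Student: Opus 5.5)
The plan is to reduce the lemma to a single elementary identity about the crossing distance, since here the lemma speaks only of $d_{i2}^{\Us}$ and $d_{j1}^{\Ls}$ and not of the min-variant cost itself. With the normalization $x_F=1$, $y_F=0$, for every $s\in[0,1]$ we have
\[
d_{i2}^{\Us}(s)=|x_i-s|+s,\qquad d_{j1}^{\Ls}(s)=|y_j-s|+(1-s).
\]
First I will observe that on the relevant agent sets the two-case definition of the breakpoints collapses to one formula. For $i\in M_{12}$ with $x_i<1/2$ we have $\med\{0,x_i,1/2\}=\med\{0,x_i,1\}$: both equal $0$ if $x_i\le 0$ and $x_i$ if $0<x_i<1/2$. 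Symmetrically, for $j\in N_{12}$ with $y_j>1/2$ we have $\med\{1/2,y_j,1\}=\med\{0,y_j,1\}$. Hence for every $i\in A_{\min}$ we may take $a_i=\med\{0,x_i,1\}$, and for every $j\in B_{\min}$ we may take $b_j=\med\{0,y_j,1\}$.

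Next I will verify $|x-s|+s=\alpha+2(s-a)^+$ with $a=\med\{0,x,1\}$ for all $s\in[0,1]$, splitting into three cases.
\begin{itemize}
\item If $x\le 0$, then $a=0$ and the left side equals $2s-x$. So $\alpha=-x$ works, since $(s-0)^+=s$.
\item If $x\ge 1$, then $a=1$ and the left side equals $x$, while $(s-1)^+=0$ on $[0,1]$. So $\alpha=x$ works.
\item If $0<x<1$, then $a=x$. For $s\le x$ the left side is $x$, and for $s>x$ it is $2s-x=x+2(s-x)$. So $\alpha=x$ works.
\end{itemize}
In every case $\alpha$ depends only on $x_i$, so it is independent of $s$.

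The statement for $j\in B_{\min}$ is the mirror image: $|y-s|+1-s=\beta+2(b-s)^+$ with $b=\med\{0,y,1\}$.
\begin{itemize}
\item If $y\ge 1$, then $b=1$ and the left side is $y+1-2s=(y-1)+2(1-s)$, so $\beta=y-1$.
\item If $y\le 0$, then $b=0$ and the left side is constant, equal to $1-y$, while $(0-s)^+=0$; so $\beta=1-y$.
\item If $0<y<1$, then $b=y$. For $s\ge y$ the left side is $1-y$, and for $s<y$ it is $1-y+2(y-s)$; so $\beta=1-y$.
\end{itemize}
Alternatively, this half follows from the first by the reflection $s\mapsto 1-s$, which swaps the roles of the two lines.

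There is no real obstacle here. The only point needing care is the first reduction, namely noticing that restricting $M_{12}$ to $x_i<1/2$ (and $N_{12}$ to $y_j>1/2$) makes the truncations at $1/2$ irrelevant. Once that is done, the remaining work is the routine case check above.
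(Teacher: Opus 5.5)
Your proof is correct and follows essentially the same route as the paper, which is a direct case check that $|x_i-s|+s$ (and its mirror $|y_j-s|+1-s$) has the stated one-sided form on $[0,1]$. The one difference is that you first note $\med\{0,x_i,1/2\}=\med\{0,x_i,1\}$ for $x_i<1/2$ (and symmetrically on $B_{\min}$), which merges the paper's separate $M_2$ and $M_{12}$ cases into a single check; you also write out the $B_{\min}$ half that the paper leaves as ``similar.''
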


Under the min-variant, the complete individual costs do not admit the same one-sided breakpoint representation as in Lemmas~\ref{lem:sc-max-expansion} and~\ref{lem:sc-sum-expansion}. Therefore, Lemma~\ref{lem:sc-min-auxiliary} instead characterizes the crossing travel costs. The sum of these crossing travel costs over the agents in $A_{\min}$ and $B_{\min}$ forms an auxiliary objective minimized by the following mechanism.

\par\smallskip
\noindent\textbf{Mechanism 3.} Given an instance $\mathbf c$, sort the multiset $K_{\min}=\{a_i:i\in A_{\min}\}\cup\{b_j:j\in B_{\min}\}$ in nondecreasing order as $e_1\le e_2\le\cdots\le e_{|A_{\min}|+|B_{\min}|}$, and output
\[
s=
\begin{cases}
0, & |B_{\min}|=0,\\
e_{|B_{\min}|}, & |B_{\min}|>0.
\end{cases}
\]

\begin{theorem}
Mechanism 3 is strategyproof and a $3$-approximation mechanism for the social cost objective under the min-variant individual cost.
\label{appstmt:5}\label{appstmt:6}
\end{theorem}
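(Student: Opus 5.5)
The plan has two parts: strategyproofness via monotonicity of the bridge-dependent costs, and the approximation bound by comparing with the auxiliary objective that Mechanism~3 minimizes. Throughout, by Lemma~\ref{lem:bridge-interval} and Remark~\ref{rem:outside-constant}, we restrict to $s\in[0,1]$. Bridge-independent agents, and agents $i\in M_{12}$ with $x_i\ge 1/2$ or $j\in N_{12}$ with $y_j\le 1/2$, have constant cost, so they cannot gain by any report.

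\emph{Strategyproofness.} Take a truthful agent $i\in A_{\min}$. By Lemma~\ref{lem:sc-min-auxiliary}, $d_{i2}^{\Us}(s)=\alpha_i+2(s-a_i)^+$. Hence $C_i^{\Us}(s)$ is either $d_{i2}^{\Us}(s)$ or $\min\{1-x_i,d_{i2}^{\Us}(s)\}$. In both cases it is nondecreasing in $s$ and constant on $[0,a_i]$.

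Any misreport by $i$ (of location, preference, or both) has one of two effects. Either it removes her point from $K_{\min}$, because she becomes bridge-independent or leaves $A_{\min}$. Or it replaces $a_i$ by some other value $a_i'$. It can never add a point to $B_{\min}$, since she stays on $L_1$. So $|B_{\min}|$ is unchanged and the output remains the $|B_{\min}|$-th order statistic.

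Removing one element weakly increases the $|B_{\min}|$-th order statistic. Moving $a_i$ when $a_i\le s$ can only weakly increase the statistic. Moving $a_i$ when $a_i>s$ can decrease it, but the new output is still $\ge s$ or lies in $[0,a_i]$, where her cost is constant. In no case does her cost drop. Agents in $B_{\min}$ are handled symmetrically: their costs are nonincreasing in $s$, and their misreports can only remove or move points of $K_{\min}$ while $|B_{\min}|$ can only stay the same or shrink. I expect this case to need a careful check that shrinking the index by one when leaving $B_{\min}$ weakly decreases the output.

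\emph{Approximation.} For each relevant agent $k\in A_{\min}\cup B_{\min}$, write $g_k(t)$ for her crossing cost ($d_{k2}^{\Us}$ or $d_{k1}^{\Ls}$). Let $c_k$ be her local cost to her own-side facility if $k\in M_{12}\cup N_{12}$, and $c_k=+\infty$ otherwise. Then $C_k(t)=\min\{c_k,g_k(t)\}$. Let $G(t)=\sum_k g_k(t)$. By Lemma~\ref{lem:sc-min-auxiliary}, $G$ is a sum of functions $2(t-a_i)^+$ and $2(b_j-t)^+$ plus constants. The standard slope-balancing argument from Mechanisms~1--2 then shows that the output $s$ minimizes $G$ over $[0,1]$. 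Let $s^*\in[0,1]$ be optimal for $\SC$.

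Split the relevant agents into $S_1=\{k: g_k(s^*)\le c_k\}$ and $S_2=\{k:c_k<g_k(s^*)\}$. Every agent in $S_2$ lies in $M_{12}\cup N_{12}$. Then
\[
\SC(s)-\SC(s^*)\le \sum_{k\in S_1}\bigl(g_k(s)-g_k(s^*)\bigr)\le \sum_{k\in S_2}\bigl(g_k(s^*)-g_k(s)\bigr),
\]
where the first inequality uses $C_k(s)\le g_k(s)$ on $S_1$ and $C_k(s)\le c_k=C_k(s^*)$ on $S_2$, and the second uses $G(s)\le G(s^*)$.

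The key step, and the main obstacle, is the bound $g_k(s^*)-g_k(s)\le 2c_k$ for $k\in S_2$. Take $i\in M_{12}$ with $x_i<1/2$, so $c_i=1-x_i$ and $g_i(t)=|x_i-t|+t$. On $[0,1]$ this function ranges over $[|x_i|,\,2-x_i]$. Its oscillation is $2-2x_i=2c_i$ when $0\le x_i<1/2$, and $2\le 2c_i$ when $x_i<0$. The case $j\in N_{12}$ with $y_j>1/2$ is symmetric.

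Combining, $\SC(s)\le \SC(s^*)+2\sum_{k\in S_2}c_k\le 3\,\OPT_{\SC}$. This holds because $\sum_{k\in S_2}c_k=\sum_{k\in S_2}C_k(s^*)\le\OPT_{\SC}$.
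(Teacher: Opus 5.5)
Your proof is correct and follows essentially the same route as the paper: the approximation bound rests on the fact that Mechanism~3's output minimizes the total crossing cost $G$ of $A_{\min}\cup B_{\min}$, combined with a per-agent comparison of crossing and local cost that uses $x_i<1/2$ (resp.\ $y_j>1/2$), and your $S_1/S_2$ split with the oscillation bound $g_k(s^*)-g_k(s)\le 2c_k$ is only a minor variant of the paper's chain $C_k(s_0)\le g_k(s_0)$, $G(s_0)\le G(s^*)$, $d_{i2}^{\Us}(s^*)\le 3C_i^{\Us}(s^*)$. For strategyproofness, which the paper dismisses as analogous to Mechanism~1, your sketch is sound, and the check you flag for agents leaving $B_{\min}$ does go through: removing one breakpoint still leaves at least $|B_{\min}|-1$ breakpoints no larger than $e_{|B_{\min}|}$, so the new output (or $0$ if the index drops to zero) is at most the old one, which cannot help an agent whose cost is nonincreasing in $s$.
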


\begin{theorem}\label{thm:sc-min-both-lb2}
No deterministic strategyproof mechanism has an approximation ratio less than $2$ for the social cost objective under the min-variant individual cost.
\end{theorem}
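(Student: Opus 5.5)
The plan is to use a two-agent construction. Both agents are interested in both facilities, and each wants the bridge at the opposite endpoint of $[0,1]$. A single preference misreport then forces any strategyproof mechanism into a ratio of at least $2$.

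\textbf{Base instance $\mathbf c$.} It has one agent $i\in M_{12}$ at $x_i=0$ and one agent $j\in N_{12}$ at $y_j=1$ (recall $x_F=1$, $y_F=0$).
\begin{itemize}
\item Under the min-variant, $C_i^{\Us}(s)=\min\{1,2|s|\}$ and $C_j^{\Ls}(s)=\min\{1,2|1-s|\}$.
\item Hence $\SC(\mathbf c,0)=\SC(\mathbf c,1)=1$, so $\OPT_{\SC}(\mathbf c)=1$.
\item For $s\in[0,1/2]$ we have $\SC(\mathbf c,s)=1+2s$, and symmetrically on $[1/2,1]$.
\item In particular, $\SC(\mathbf c,1/2)=2$.
\end{itemize}
Let $f$ be any deterministic strategyproof mechanism and put $s=f(\mathbf c)$. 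If $s=1/2$, or more generally $\SC(\mathbf c,s)\ge 2$, the ratio on $\mathbf c$ is already at least $2$. So the remaining cases are $s<1/2$ and $s>1/2$, which are symmetric.

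\textbf{Case $s<1/2$.} Agent $j$'s true cost is then $C_j^{\Ls}(s)=1$. To see this, note that $2|1-s|>1$.

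We consider the instance $\mathbf c'$ in which $j$ misreports her preference as $(1,0)$, i.e.\ $j\in N_1$, with everything else unchanged. Let $t=f(\mathbf c')$.
\begin{itemize}
\item By strategyproofness, $j$'s true cost at $t$, namely $\min\{1,2|1-t|\}$, must be at least $1$. This rules out $t\in(1/2,3/2)$.
\item In $\mathbf c'$ the costs are $C_i^{\Us}(t)=\min\{1,2|t|\}$ and $C_j^{\Ls}(t)=|1-t|+|t-1|=2|1-t|$.
\item So $\SC(\mathbf c',1)=1$, giving $\OPT_{\SC}(\mathbf c')\le 1$.
\end{itemize}
It remains to check that every allowed $t$ has $\SC(\mathbf c',t)\ge 2$.
\begin{itemize}
\item For $t\le 0$: $j$ alone has cost at least $2$.
\item For $0\le t\le 1/2$: the total is $2t+2-2t=2$.
\item For $t\ge 3/2$: $i$ has cost $1$ and $j$ has cost at least $1$.
\end{itemize}
In every allowed case the ratio on $\mathbf c'$ is therefore at least $2$.

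\textbf{Case $s>1/2$.} This is handled symmetrically. Agent $i$ has true cost $1$, and we let $i$ misreport $(0,1)$, i.e.\ $i\in M_2$. In the resulting instance the crossing cost $2|t|$ forces the same dichotomy.

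\textbf{Main obstacle.} The only delicate step is handling outputs outside $[0,1]$. Lemma~\ref{lem:bridge-interval} concerns optimal positions, not mechanism outputs, so we cannot simply assume $t\in[0,1]$. The case split on $t$ above (the ranges $t\le0$ and $t\ge3/2$) is exactly what is needed to cover this, and it should be written out explicitly.
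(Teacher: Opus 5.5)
Your proof is correct and uses the same three instances as the paper, together with the same single-agent preference misreports. Your $\mathbf c$, $\mathbf c'$, and symmetric deviation instance are the paper's $\mathbf c_2$, $\mathbf c_3$, and $\mathbf c_1$. The only difference is organizational: you case-split on whether $f(\mathbf c)$ lies left or right of $1/2$ and let the appropriate agent deviate, whereas the paper first pins $f(\mathbf c_1)\in(-1/2,1/2)$ and $f(\mathbf c_3)\in(1/2,3/2)$ and then derives a contradiction at $\mathbf c_2$.
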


\section{Maximum Cost}

In this section, we study the design of strategyproof mechanisms under the maximum cost objective. We focus on the information structure where both locations and preferences are private. Under this information structure, we consider three individual cost variants: the max-variant, the sum-variant, and the min-variant.

We first give a unified lower bound for the maximum cost objective. In the constructed instance, each agent is interested in exactly one facility, so the max-variant, sum-variant, and min-variant individual costs coincide. Thus, the lower bound applies to all three variants.

\begin{theorem}
No deterministic strategyproof mechanism has an approximation ratio less than $5/3$ for the maximum cost objective under any of the three individual cost variants.
\label{appstmt:9}
\end{theorem}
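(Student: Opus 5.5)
The plan is to build a small family of instances in which every agent is interested in exactly one facility. Each agent in $M$ will be in $M_2$ and each agent in $N$ in $N_1$. For such agents the three individual cost variants coincide, so one argument covers all three variants. By Lemma~\ref{lem:bridge-interval} the optimum always lies in $[0,1]$. I will also assume the mechanism's output lies in $[0,1]$. This is only a simplification: a mechanism that outputs outside $[0,1]$ must still be handled, either by showing that projecting its output onto $[0,1]$ does not hurt any of the constructed instances, or by treating those outputs directly in the case analysis.

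For $s\in[0,1]$, an agent $i\in M_2$ at $x_i$ has cost $|x_i-s|+s$, which equals $\max\{x_i,\,2s-x_i\}$ when $x_i\ge 0$. An agent $j\in N_1$ at $y_j$ has cost $|y_j-s|+1-s$, which equals $\max\{1-y_j,\,y_j+1-2s\}$ when $y_j\le 1$.

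The base instance $\mathbf c$ has one agent of $M_2$ at $x=0$, with cost $2s$, and one agent of $N_1$ at $y=1$, with cost $2-2s$. Here $\OPT_{\MC}(\mathbf c)=1$, attained at $s=1/2$. Let $s=f(\mathbf c)$. The instance is symmetric under the reflection $s\mapsto 1-s$, which swaps the two lines and the two facilities, so I would assume $s\le 1/2$ without loss of generality. Then $\MC(\mathbf c,s)=2-2s$, and we are done unless $s\ge 1/6$.

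The main step uses strategyproofness to push the output further left on a related instance. Let $\mathbf c'$ be the instance in which the $M_2$ agent is truly at $x'=s$ and the $N_1$ agent stays at $1$. If the $M_2$ agent in $\mathbf c'$ misreported $0$, the output would be $s$, giving her cost $s$, which is her smallest possible cost. Any output $t>s$ costs her $2t-s>s$. Hence strategyproofness forces $t=f(\mathbf c')\le s$, so $\MC(\mathbf c',t)\ge 2-2s$. Balancing $2t-s=2-2t$ gives $\OPT_{\MC}(\mathbf c')=(2-s)/2$. The resulting ratio is $4(1-s)/(2-s)$.

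The two instances alone only give $\max\{2-2s,\,4(1-s)/(2-s)\}$, which is too weak near $s=1/2$. So the crux, and the step I expect to be hardest, is tuning the construction so that the two-sided bound meets at $5/3$. I would try a third instance, or several agents stacked at the same point, so that the optimum of the perturbed instance drops while strategyproofness still pins the output. Concretely, first I would try the following. Put the $N_1$ agent at $y<1$ in the base instance, which moves the optimum off $1/2$ and breaks the symmetry. Use two agents of $M_2$ (one at $0$ and one at some $x$), moving only one of them in the deviation step. Then relocate the $N_1$ agent as well, obtaining a second strategyproofness constraint $t\ge\cdot$ on the opposite side. The parameters should be chosen so that both cases, "$s$ small" and "$s$ large", give ratio at least $5/3$; the breakpoint is likely at $s=1/3$, where $2-2s=4/3$ must be boosted by the extra agents.

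Once the parameters are fixed, each step is a routine evaluation of piecewise-linear max-costs. The final statement then follows for every variant, because in all constructed instances each agent is interested in a single facility.
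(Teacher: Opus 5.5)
Your skeleton is the paper's. You use single-facility agents so the three variants coincide, a base instance that confines $h=f(\mathbf c)$, and a deviation instance in which the $M_2$ agent is moved to $h$, so strategyproofness forces the new output into $[\min\{h,0\},\max\{h,0\}]$; the mirror case is handled via the $N_1$ agent. The gap is that you never produce an instance that actually yields $5/3$. With agents at $0$ and $1$, your two instances give only $\max\{2-2h,\,4(1-h)/(2-h)\}$, which equals $4/3$ at $h=1/2$. A mechanism outputting $1/2$ in both instances is consistent with strategyproofness there and has ratio $4/3$. Moreover, $h=1/2$ can never be excluded by the base instance, since it is optimal there. The modifications you list (moving the $N_1$ agent to $y<1$, stacking extra $M_2$ agents, a third instance) are left as guesses. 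The ``tuning'' you defer is the whole content of the theorem, so as written nothing beyond $4/3$ is proved.

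The missing idea is to push the two agents \emph{outward}, away from $[0,1]$. The paper takes $x_1=-3/2$ (interested only in $F_2$) and $y_2=5/2$ (only $F_1$). Then $\OPT_{\MC}=5/2$ at $s=1/2$, and a ratio below $5/3$ forces $-1/3<h<4/3$; this is checked for every $h\in\mathbb{R}$. For $h\le 1/2$, move agent 1 to $h$. The new output lies in $[\min\{h,0\},\max\{h,0\}]$, so agent 2 pays at least $7/2-2\max\{h,0\}$, while the new optimum is $7/4-h/2$ (attained at $s=7/8+h/4$). The ratio $(7/2-2h)/(7/4-h/2)$ is at least $5/3$ exactly when $h\le 1/2$, and for $-1/3<h\le 0$ one gets $(7/2)/(7/4-h/2)>5/3$. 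The choice $3/2$ is sharp. For agents at $-d$ and $1+d$, your two-step argument at $h=1/2$ gives $4(1+d)/(3+2d)$ if $d\le 3/2$. It gives $(1+d)/d$ if $d\ge 3/2$, where the optimal bridge is capped at $1$. This bound peaks at $5/3$ exactly when $d=3/2$, and your choice $d=0$ is the value $4/3$. Also, in this construction outputs $h\in(-1/3,0)$ lying outside $[0,1]$ survive the base instance, so your assumption that outputs lie in $[0,1]$ cannot simply be waved away. The paper handles that subcase directly, as above.
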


Under the maximum cost objective, a good bridge position should balance the crossing travel costs of agents from the two regions. We therefore focus on the minimum location among the agents in $A$ and the maximum location among the agents in $B$. If $A\ne\varnothing$, let $a=\min_{i\in A}x_i$; if $B\ne\varnothing$, let $b=\max_{j\in B}y_j$. Based on these two locations, we define the following mechanism, which is used for all three individual cost variants.

\par\smallskip
\noindent\textbf{Mechanism 4.} Given an instance $\mathbf c$, output
\[
s=
\begin{cases}
\med\left\{0,1,\med\left\{a,b,\dfrac12\right\}\right\}, & A\ne\varnothing,\ B\ne\varnothing,\\[1ex]
0, & B=\varnothing,\\
1, & A=\varnothing,\ B\ne\varnothing.
\end{cases}
\]

\subsection{Max-Variant}

\begin{theorem}
Mechanism 4 is strategyproof and a $5/3$-approximation mechanism for the maximum cost objective under the max-variant individual cost.
\label{appstmt:7}\label{appstmt:8}
\end{theorem}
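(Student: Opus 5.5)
The plan is to prove the two properties separately. Strategyproofness should follow from a monotonicity and median argument. The approximation ratio should follow from comparing the mechanism with a lower bound on $\OPT_{\MC}$ built from the two extreme agents attaining $a$ and $b$. By Lemma~\ref{lem:bridge-interval} and the clamping in Mechanism~4, I only consider $s\in[0,1]$. With $x_F=1$ and $y_F=0$ I would first record the shape of the costs on $[0,1]$. For $i\in M_2$, $C_i^{\Us}(s)=\max\{x_i,2s-x_i\}$, which is constant for $s\le x_i$ and then increases at rate $2$. For $i\in M_{12}$, $C_i^{\Us}(s)=\max\{|x_i-1|,\,|x_i-s|+s\}$, which is nondecreasing in $s$ and constant for $s\le\med\{1/2,x_i,1\}$, as in Lemma~\ref{lem:sc-max-expansion}. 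Symmetrically, every $j\in B$ has a cost that is nonincreasing in $s$ and constant for $s\ge b_j\ge$ (the relevant truncation of) $y_j$. Bridge-independent agents are handled by Remark~\ref{rem:outside-constant}.

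\textbf{Strategyproofness.} Consider $i\in A$; the case of $B$ is symmetric. Her cost is nondecreasing in $s$, so she can only gain by lowering the output $s$. First suppose the current output satisfies $s\le x_i$, or more precisely $s\le$ her breakpoint. Then her cost is already minimal and she cannot gain. Otherwise $s>x_i\ge a$, so the median is not attained at $a$. In that case $s=\min\{b,1/2\}$ clamped to $[0,1]$, and this value is unchanged by any report $a'\le a$. A report that raises the minimum can only raise $s$, since the median is monotone in $a$. Now consider misreporting her preference so as to leave $A$. Either the new minimum $a$ weakly increases, which weakly increases $s$, or $A$ becomes empty while $B\neq\varnothing$, which gives $s=1$. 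Both outcomes are weakly worse for her. Joining $B$ by misreport is impossible, since $B$ consists of agents on $L_2$. Agents with $B=\varnothing$ already face $s=0$, which is their best outcome.

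\textbf{Approximation.} I would first treat the cases where $A$ or $B$ is empty. There the output ($0$ or $1$) puts every bridge-dependent agent at her constant minimum, so the mechanism is optimal. Next comes the case $a\ge b$ with both sets nonempty. Here $s=\med\{a,b,1/2\}$, clamped, lies in $[b,a]$ after truncation to $[0,1]$. Hence every agent of $A$ has $s\le x_i$ and every agent of $B$ has $s\ge y_j$, so all costs are at their $s$-independent minima and the mechanism is again optimal.

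The main case is $a<b$. Let the extreme agents be $i_a$ and $j_b$. For every $s$, their costs satisfy $C_{i_a}(s)\ge 2s-a$ and $C_{j_b}(s)\ge b+1-2s$, and also $C_{i_a}(s)\ge a$ and $C_{j_b}(s)\ge 1-b$. This gives the lower bound
\[
\OPT_{\MC}\ \ge\ \max\Big\{\tfrac{b-a+1}{2},\ \max_k \text{(constant part of agent }k)\Big\}.
\]
For the upper bound, I would note that the mechanism's maximum cost is attained either by some agent of $A$ or by some agent of $B$. The worst such agents are $i_a$, with cost at most $\max\{\,|a-1|,\,2s-a\}$, and $j_b$, symmetrically, with $s\in\{a,b,1/2\}$ truncated. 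I would split into the subcases $s=1/2$, $s=a$ (when $a>1/2$) and $s=b$ (when $b<1/2$), together with whether $a,b$ lie inside $[0,1]$. In each subcase I would bound $\MC(s)$ by $\tfrac53\OPT_{\MC}$, using a convex combination of the two lower bounds above. The tight instance is expected around $a\approx 0$, $b\approx 1/3$-type configurations, mirroring the lower bound in Theorem~\ref{appstmt:9}.

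The main obstacle is this last case analysis. It is delicate for $M_{12}$ and $N_{12}$ agents, whose constant term $|x_i-1|$ (respectively $|y_j|$) can dominate. It is also delicate when $a<0$ or $b>1$. I would first try to show that the $M_{12}$ and $N_{12}$ terms only enlarge $\OPT_{\MC}$ as much as they enlarge the mechanism's cost, which would reduce the problem to the single-preference computation.
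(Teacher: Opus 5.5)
Your strategyproofness argument matches the paper's. So does your handling of the easy cases: $A=\varnothing$ or $B=\varnothing$, and $a\ge b$, where every bridge-dependent agent sits at her minimum. For $a<0$, read ``$s_0\le x_i$'' there as $s_0\le\med\{0,x_i,1\}$.

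\textbf{The gap.} The gap is the approximation bound for $a<b$, which is the real content of the theorem. You only sketch it, and the lower bounds you actually record would not give $5/3$. You write $C_{i_a}(s)\ge a$ and $C_{j_b}(s)\ge 1-b$. These are vacuous exactly when $a<0$ or $b>1$, and that is where the ratio is decided. Take one $M_2$ agent at $a=-L$ and one $N_1$ agent at $b=1/2$. Mechanism~4 outputs $s_0=1/2$ with $\MC(s_0)=L+1$. Your bounds only certify $\OPT_{\MC}\ge (L+3/2)/2$, so the ratio you can prove tends to $2$.

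\textbf{The missing bound.} Because $s\in[0,1]$, you have $d_{i_a,2}^{\Us}(s)=|a-s|+s\ge -a$ and $d_{j_b,1}^{\Ls}(s)\ge b-1$. Hence $\OPT_{\MC}\ge\max\{-a,\,b-1,\,0\}$. The constant $5/3$ comes from playing this against $\OPT_{\MC}\ge (b+1-a)/2$:
\begin{itemize}
\item When $s_0=1/2$ and $\MC(s_0)\le 1-a$: use $-a$ if $a\le -3/2$, giving ratio at most $1+1/(-a)\le 5/3$. Otherwise use $(3/2-a)/2$, giving ratio at most $2(1-a)/(3/2-a)\le 5/3$.
\item When $s_0=b\in(0,1/2]$ and $\MC(s_0)\le 2b-a$: the same split at $a=-3/2$ works, because $7b-a\le 5$.
\item The $b$-side cases split at $b=5/2$ in the same way.
\end{itemize}
So the extremal configuration is $a\approx -3/2$, $b=1/2$, which is the paper's tight example. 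It is not $a\approx 0$, $b\approx 1/3$ as you guessed; the ratio there is $1$.

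\textbf{The $M_{12}$ and $N_{12}$ agents.} The obstacle you anticipate here disappears once the upper bound is set up correctly. It is not true that $i_a$ is always the costliest agent of $A$; an $M_{12}$ agent with large $|x_i-1|$ can be. What does hold is this: for every $i\in A$ and $s\in[0,1]$, $d_{i2}^{\Us}(s)=x_i$ if $s\le x_i$, and $d_{i2}^{\Us}(s)=2s-x_i\le 2s-a$ otherwise. Every $s$-independent quantity that appears is itself a lower bound on $\OPT_{\MC}$: the local distances, $x_i$ for $x_i\ge 0$, and $1-y_j$ for $y_j\le 1$. Collect them into $\MC_0$. Then $\MC(s)\le\max\{\MC_0,\,2s-a,\,b+1-2s\}$ with $\MC_0\le\OPT_{\MC}$. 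So either $s_0$ is optimal, or $\MC(s_0)\le\max\{2s_0-a,\,b+1-2s_0\}$. This reduces everything to the two extreme agents and the three lower bounds above, with no separate reduction of the $M_{12}$ and $N_{12}$ terms needed.
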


\subsection{Sum-Variant}

\begin{theorem}
Mechanism 4 is strategyproof and a $5/3$-approximation mechanism for the maximum cost objective under the sum-variant individual cost.
\label{appstmt:10}\label{appstmt:11}
\end{theorem}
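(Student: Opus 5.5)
We split the proof into strategyproofness and the approximation bound. Both parts rest on the shape of each agent's cost as a function of $s\in[0,1]$, so we first record it. Throughout, $x_F=1$ and $y_F=0$, and by Lemma~\ref{lem:bridge-interval} we only consider $s\in[0,1]$.

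\textbf{Cost shape.} For $i\in A$ the crossing cost is $d_{i2}^{\Us}(s)=|x_i-s|+s=\max\{x_i,\,2s-x_i\}$. Under the sum-variant,
\[
C_i^{\Us}(s)=\gamma_i+\max\{x_i,2s-x_i\},
\]
where $\gamma_i=0$ for $i\in M_2$ and $\gamma_i=|x_i-1|$ for $i\in M_{12}$. So $C_i^{\Us}$ is constant for $s\le x_i$ and nondecreasing in $s$. Symmetrically, for $j\in B$,
\[
C_j^{\Ls}(s)=\delta_j+\max\{1-y_j,\,1+y_j-2s\},
\]
with $\delta_j\in\{0,|y_j|\}$. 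This is nonincreasing in $s$ and constant for $s\ge y_j$. This agrees with Lemma~\ref{lem:sc-sum-expansion}. All other agents are bridge-independent by Remark~\ref{rem:outside-constant}.

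\textbf{Strategyproofness.} Take $i\in A$ with true location $x_i$, and let $s$ be the output under truthful reporting. The only lever $i$ has is the value of $a$, or membership of $A$, so we check each deviation.
\begin{itemize}
\item If $s\le x_i$, her cost is already minimal, since her cost is constant for $s\le x_i$.
\item If $s>x_i\ge a$, then $a$ is strictly below the median of $\{a,b,1/2\}$. Hence $s=\med\{0,1,\min\{b,1/2\}\}$, and lowering her report of $a$ leaves $s$ unchanged. Raising $a$ can only raise $s$, because the median is monotone in $a$.
\item Misreporting her preference as $(1,0)$ removes her from $A$. This either raises $a$ or makes $A=\varnothing$, which gives output $1$; in both cases $s$ weakly increases. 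Switching between $(0,1)$ and $(1,1)$ keeps her in $A$ with the same reported location.
\end{itemize}
In every case her cost does not decrease, because it is nondecreasing in $s$. A bridge-independent agent on $L_1$ cannot gain by reporting into $A$, since her cost does not depend on $s$. The argument for $B$ is symmetric, with $s$ nonincreasing in $b$.

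\textbf{Approximation: setting up the cases.} The $s$-dependent part of $\MC$ is controlled by the extremal agents, because $\max_{i\in A}(2s-x_i)=2s-a$ and $\max_{j\in B}(1+y_j-2s)=1+b-2s$. We split into cases.
\begin{itemize}
\item $B=\varnothing$ (output $0$) or $A=\varnothing$ (output $1$): every agent attains her minimum cost, so the mechanism is optimal.
\item $b\le a$: the output lies in $[b,a]\cap[0,1]$, or at the clamp $0$ or $1$. There every $A$-agent and every $B$-agent is at her constant level, so the mechanism is again optimal.
\item $a<b$: the output is $s=\med\{a,b,1/2\}$, clamped to $[0,1]$. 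This is the main case.
\end{itemize}

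\textbf{Approximation: the case $a<b$.} For the lower bound we use the agents realizing $a$ and $b$:
\[
\OPT\ge \min_{t}\max\{\gamma_{i_a}+\max(a,2t-a),\ \delta_{j_b}+\max(1-b,1+b-2t)\},
\]
which in particular gives $\OPT\ge (1+b-a)/2$. We also have $\OPT\ge\gamma_i+x_i$ and $\OPT\ge\delta_j+1-y_j$ for every agent, and $\OPT\ge$ every bridge-independent cost. For the upper bound we treat the three subcases $s=a$, $s=b$ and $s=1/2$ separately. When $s=1/2$ (so $a<1/2<b$), any $A$-agent's cost is at most $\gamma_i+\max\{x_i,1-x_i\}$ and any $B$-agent's is at most $\delta_j+\max\{1-y_j,y_j\}$. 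We then compare these with the lower bounds above and aim for the extremal ratio $5/3$, which should match the instance of Theorem~\ref{appstmt:9}.

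\textbf{Main obstacle.} The difficulty is the additive constants $\gamma_i=|x_i-1|$ and $\delta_j=|y_j|$ of the two-facility agents. The agent with the worst cost at $s$ need not be the one realizing $a$ or $b$: an agent in $M_{12}$ near $1/2$ has a large constant but a small slope contribution. I would first try to bound each agent's mechanism cost by $\gamma_i+x_i+2(s-x_i)^+$ and split on whether $2(s-x_i)^+\le\frac23\OPT$, using $\OPT\ge\gamma_i+x_i$ together with $\OPT\ge(1+b-a)/2$. The observation that $\gamma_i+x_i\ge1$ for $M_{12}$ agents with $x_i\le 1$ should absorb these constants, since it forces $\OPT\ge1$ whenever such an agent is present.
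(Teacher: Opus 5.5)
Your strategyproofness argument is fine and matches the paper: both reduce it to the monotonicity of each $A$/$B$ cost in $s$ and the monotonicity of the median rule, as in Theorem~\ref{appstmt:7}. The empty-side and $b\le a$ cases are also handled correctly.

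The gap is in the approximation bound for $a<b$. You set it up but never carry it out, and the lower bounds you commit to cannot certify $5/3$. Those bounds are:
\begin{itemize}
\item the single-agent bound $\OPT\ge\gamma_i+x_i$;
\item $\OPT\ge1$ when $M_{12}\neq\varnothing$;
\item the pair bound $\OPT\ge(1+b-a)/2$ from $i_a,j_b$.
\end{itemize}

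For a counterexample, take $p\in M_2$ at $-\varepsilon$, $q\in M_{12}$ at $0$, and $r\in N_1$ at $1/2$. Then $a=-\varepsilon$ is realized by $p$, $b=1/2$, $s_0=1/2$, and $\MC(s_0)=C_q^{\Us}(1/2)=2$. Your bounds give only $\OPT\ge\max\{1,\,3/4+\varepsilon/2\}=1$, so the certified ratio is $2$. The true optimum is $5/4$, attained at $t=1/8$. This is exactly the obstacle you name: the maximizer's constant $\gamma_q=1$ never enters a two-agent bound, and $\OPT\ge1$ is too weak to absorb it.

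Separately, $\OPT\ge\gamma_i+x_i$ is vacuous for $x_i<0$. With one $M_2$ agent at $-K$ and one $N_1$ agent at $1/2$, the mechanism pays $K+1$ against your bound $(K+3/2)/2$, so the certified ratio tends to $2$. You need $\OPT\ge\gamma_i+|x_i|$, which holds because $s^*\in[0,1]$. Your exact min--max expression for $i_a,j_b$ contains this, but you only extract $(1+b-a)/2$ from it.

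The missing idea is to pair the \emph{maximizing} agent itself with the extreme agent on the opposite side. (The paper does the same with the extreme agent of her preference group, via $x_\ell^2,x_\ell^{12},y_r^1,y_r^{12}$.) For $i\in A$ and $s^*\in[0,1]$,
$\OPT\ge\frac12\bigl(C_i^{\Us}(s^*)+C_{j_b}^{\Ls}(s^*)\bigr)\ge\frac12\bigl(\gamma_i+\delta_{j_b}+1+b-x_i\bigr)$,
and symmetrically for $j\in B$ paired with $i_a$. In the example above this already gives $\OPT\ge5/4$.

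Combine this with $\OPT\ge\gamma_i+|x_i|$ and the mechanism cost $\gamma_i+2s_0-x_i$ of a maximizer with $x_i<s_0$ (symmetrically $\delta_j+1+y_j-2s_0$ on the $B$ side). A one-variable comparison in each subcase then gives a ratio of at most $5/3$. The subcases are:
\begin{itemize}
\item $s_0=1/2$, with the maximizer in $M_2$, $M_{12}$, $N_1$ or $N_{12}$;
\item $s_0=b\le1/2$;
\item its mirror $s_0=a\ge1/2$.
\end{itemize}
This is exactly what the paper's four pairwise inequalities and Cases 1.1--1.4 and 3.1--3.2 carry out.
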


\subsection{Min-Variant}

To prove the approximation ratio, we first give two lower bounds on the optimal maximum cost based on the output of Mechanism 4.

\begin{lemma}\label{lem:minmax-control-lb}
Given an instance $\mathbf c$, let the output of Mechanism 4 be $s_0\in[0,1]$. If there exists $i\in A$ with $x_i<s_0$, then $\OPT_{\MC}(\mathbf c)\ge s_0$. If there exists $j\in B$ with $y_j>s_0$, then $\OPT_{\MC}(\mathbf c)\ge 1-s_0$.
\end{lemma}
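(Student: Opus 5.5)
The plan is to prove the first claim directly; the second then follows by the symmetry $x\leftrightarrow y$, $s\leftrightarrow 1-s$, which swaps the roles of $A$ and $B$ and of $F_1$ and $F_2$.

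\emph{Reducing to the main case.} By Lemma~\ref{lem:bridge-interval} it suffices to show $\MC(\mathbf c,s)\ge s_0$ for every $s\in[0,1]$. If $s_0=0$ the claim is trivial. If $A=\varnothing$ the hypothesis is vacuous. If $B=\varnothing$ then $s_0=0$. So I would assume $A\ne\varnothing$, $B\ne\varnothing$ and $s_0>0$.

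\emph{Locating $s_0$.} Since $a\le x_i<s_0$, the inner median $\med\{a,b,1/2\}$ must exceed $a$. This can only happen if $a$ is the smallest of the three values, so $\med\{a,b,1/2\}=\min\{b,1/2\}$. After clamping to $[0,1]$ with $s_0>0$, this gives
\[
0<s_0\le 1/2 \quad\text{and}\quad s_0\le b .
\]
Let $i^*\in A$ be an agent at location $a$, and $j^*\in B$ an agent at location $b$.

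\emph{Case $s\ge s_0$.} Bound the cost of $i^*$. Her crossing cost is $|a-s|+s\ge s\ge s_0$. If $i^*\in M_2$, this crossing cost is her whole cost. If $i^*\in M_{12}$, her min-variant cost is
\[
\min\{1-a,\;|a-s|+s\}.
\]
Here $1-a>1-s_0\ge 1/2\ge s_0$, since $a<s_0\le 1/2$. So her cost is at least $s_0$ in either subcase.

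\emph{Case $s<s_0$.} Bound the cost of $j^*$. Her crossing cost is
\[
|b-s|+(1-s)\ge 1-s>1-s_0\ge s_0 .
\]
If $j^*\in N_1$ this is her whole cost. If $j^*\in N_{12}$, her cost is the minimum of this crossing cost and her local cost $|b|$. Since $b\ge s_0>0$, we have $|b|=b\ge s_0$. Hence her cost is at least $s_0$ as well.

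Combining the two cases gives $\MC(\mathbf c,s)\ge s_0$ for all $s\in[0,1]$, hence $\OPT_{\MC}(\mathbf c)\ge s_0$.

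The main obstacle is the min-variant for the agents in $M_{12}$ and $N_{12}$. Their local option could undercut the crossing distance, so the crossing distance alone does not bound their cost. This is handled by the inequalities $s_0\le 1/2$ and $s_0\le b$ extracted from the median structure of Mechanism 4, which is therefore the step to check most carefully, including the clamping and the possibility $b>1$.
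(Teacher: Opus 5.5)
Your proof is correct and follows essentially the same route as the paper. Both arguments extract $s_0=\min\{b,1/2\}$ (hence $s_0\le 1/2$ and $s_0\le b$) from the median structure, then split on the position of $s$ relative to $s_0$: an $A$-agent left of $s_0$ bounds the cost when the bridge is to the right of $s_0$, the agent at $b$ bounds it otherwise, and the local options of $M_{12}$ and $N_{12}$ agents are controlled by $s_0\le 1/2$ and $b\ge s_0$. The only cosmetic differences are that you use the agent at location $a$ instead of the given agent $i$, and you place the boundary point $s=s_0$ in the other case.
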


\begin{theorem}
Mechanism 4 is strategyproof and a $3$-approximation mechanism for the maximum cost objective under the min-variant individual cost.
\label{appstmt:12}\label{appstmt:13}
\end{theorem}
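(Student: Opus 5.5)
We prove the two properties separately. Throughout, $s_0\in[0,1]$ denotes the output of Mechanism 4 and $s^*\in[0,1]$ an optimal bridge position, which exists by Lemma~\ref{lem:bridge-interval}. The key observation is that for $s\in[0,1]$ the crossing cost of an agent $i\in M$ is
\[
d_{i2}^{\Us}(s)=|x_i-s|+s=|x_i|+2(s-x_i)^+ ,
\]
for $x_i\in[0,1]$ as well as for $x_i<0$ or $x_i>1$. So it is constant for $s\le x_i$ and increases at rate $2$ afterwards. Symmetrically, $d_{j1}^{\Ls}(s)=|y_j-1|+2(y_j-s)^+$. Under the min-variant, the cost of an agent in $A$ is either $d_{i2}^{\Us}(s)$ (if $i\in M_2$) or $\min\{|x_i-1|,d_{i2}^{\Us}(s)\}$ (if $i\in M_{12}$). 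Hence every agent in $A$ has a cost that is nondecreasing in $s$ on $[0,1]$ and constant on $[0,x_i]$. Symmetrically, every agent in $B$ has a cost that is nonincreasing in $s$ and constant on $[y_j,1]$.

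\textbf{Strategyproofness.} By Remark~\ref{rem:outside-constant}, agents not in $A\cup B$ (those in $M_1\cup N_2$) cannot gain, whatever they report, including a report that moves them into $A$ or $B$. Take $i\in A$; the case $j\in B$ is symmetric.

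If $s_0\le x_i$, agent $i$ already pays her minimum possible cost on $[0,1]$, so she cannot gain. Otherwise $a\le x_i<s_0$. Since $\med\{a,b,1/2\}>a$, the output is determined by $b$ alone: $s_0=\med\{0,1,\min\{b,1/2\}\}$. Any report of $i$ affects only $a$, or possibly makes $A$ empty. Lowering $a$ leaves $s_0$ unchanged. Raising $a$ (by moving right, or by dropping out of $A$ through reporting $(1,0)$) can only weakly increase the output, since the median is monotone in $a$. If $A$ becomes empty, the output is $1\ge s_0$. In every case $i$'s cost does not decrease, by the monotonicity established above.

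\textbf{$3$-approximation.} Bridge-independent agents have costs independent of $s\in[0,1]$, so their costs are at most $\OPT_{\MC}(\mathbf c)$. For $i\in A$ we distinguish two cases.
\begin{itemize}
\item If $x_i\ge s_0$, then $C_i^{\Us}(s_0)=\min_{s\in[0,1]}C_i^{\Us}(s)\le C_i^{\Us}(s^*)\le\OPT_{\MC}(\mathbf c)$.
\item If $x_i<s_0$, the expansion above gives
\[
d_{i2}^{\Us}(s_0)\le d_{i2}^{\Us}(s^*)+2\bigl(s_0-\max\{x_i,s^*\}\bigr)^+\le d_{i2}^{\Us}(s^*)+2s_0 .
\]
Taking the minimum with the constant $|x_i-1|$ when $i\in M_{12}$, we get $C_i^{\Us}(s_0)\le C_i^{\Us}(s^*)+2s_0$. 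By Lemma~\ref{lem:minmax-control-lb}, $s_0\le\OPT_{\MC}(\mathbf c)$, so $C_i^{\Us}(s_0)\le 3\,\OPT_{\MC}(\mathbf c)$.
\end{itemize}
Symmetrically, for $j\in B$ with $y_j>s_0$ we get $C_j^{\Ls}(s_0)\le C_j^{\Ls}(s^*)+2(1-s_0)\le3\,\OPT_{\MC}(\mathbf c)$, again using Lemma~\ref{lem:minmax-control-lb}. Taking the maximum over all agents yields the ratio $3$.

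The main obstacle is the strategyproofness argument, specifically making sure that the min-variant agents in $M_{12}$ cannot gain by changing their preference report. This is handled by the monotonicity of the min of a constant and a nondecreasing function, together with the fact that the output is monotone in $a$ and $b$.
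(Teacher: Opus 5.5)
Your proof is correct. The strategyproofness part follows the paper's route: costs in $A$ are nondecreasing, costs in $B$ are nonincreasing, and the median rule is monotone in $a$ and $b$. You spell out what the paper only sketches by reference to Theorem~\ref{appstmt:7}.

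For the ratio, both arguments split on $x_i\ge s_0$ versus $x_i<s_0$ and rest on Lemma~\ref{lem:minmax-control-lb}, but the second case is bounded differently.
The paper writes $C_i^{\Us}(s_0)\le 2s_0-x_i\le 2s_0+|x_i|$ and uses the per-agent bound $\OPT_{\MC}(\mathbf c)\ge|x_i|$. For $i\in M_{12}$ this needs the by-product $s_0\le 1/2$ from the lemma's proof, so that $\min\{|x_i-1|,|x_i|\}=|x_i|$.
You compare with the optimum directly. Each crossing cost is nondecreasing with slope at most $2$ on $[0,1]$, and taking the minimum with the constant $|x_i-1|$ preserves this, so $C_i^{\Us}(s_0)\le C_i^{\Us}(s^*)+2s_0$.
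Your route treats $M_2$ and $M_{12}$ uniformly, absorbs the degenerate cases $A=\varnothing$ or $B=\varnothing$, and uses only the statement of the lemma. The paper's version never refers to $s^*$ and works only with explicit lower bounds on $\OPT_{\MC}(\mathbf c)$, at the cost of that extra case reasoning.

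There are two minor slips, neither fatal.

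First, your expansion $d_{i2}^{\Us}(s)=|x_i|+2(s-x_i)^+$ is wrong for $x_i<0$. On $[0,1]$ the true value is $2s-x_i$, while your formula gives $2s-3x_i$. The correct form is $|x_i|+2(s-\max\{x_i,0\})^+$, and symmetrically $d_{j1}^{\Ls}(s)=|y_j-1|+2(\min\{y_j,1\}-s)^+$. The error is a constant shift in $s$, so your monotonicity claims and the inequality $d_{i2}^{\Us}(s_0)\le d_{i2}^{\Us}(s^*)+2(s_0-\max\{x_i,s^*\})^+$ survive.

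Second, in the strategyproofness argument, the step from $x_i<s_0$ to $\med\{a,b,1/2\}>a$ fails when $s_0=0$. For example, $a=-1$ and $b=-2$ give $\med\{a,b,1/2\}=a$. The step is also meaningless when $B=\varnothing$. Put these cases into your first case, since $s_0=0$ already minimizes $i$'s nondecreasing cost on $[0,1]$. For $s_0>0$ the step is valid: $a<1$ and $1/2<1$ force $\med\{a,b,1/2\}<1$, so $s_0=\med\{a,b,1/2\}>a$.
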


\section{Conclusion and Future Work}

This paper studied deterministic strategyproof mechanisms for locating a bridge between two separated regions with optional preferences, where both agents' locations and preferences are private. For social cost minimization, we introduced breakpoints and designed optimal mechanisms for the max-variant and sum-variant individual costs. For the min-variant individual cost, a gap remains between the approximation ratio of $3$ achieved by our mechanism and the lower bound of $2$. For maximum cost minimization, we designed mechanisms that achieve the tight approximation ratio of $5/3$ for the max-variant and sum-variant individual costs, while the min-variant individual cost retains a gap between the upper bound of $3$ and the lower bound of $5/3$.

For future directions, an immediate question is whether the remaining gaps for the min-variant individual cost under the two social objectives can be closed. Under the maximum cost objective, the tight approximation ratios for the max-variant and sum-variant individual costs remain above $1$ in the fully private setting, motivating the study of better approximation guarantees under other information structures. We have obtained preliminary results in this direction. It would also be interesting to study randomized mechanisms and more general preference structures, such as fractional preferences over the two facilities. Finally, our model assumes that crossing the bridge incurs no cost. Extending the model to bridges with positive traversal costs is another natural direction for future work.

\bibliographystyle{splncs04}
\bibliography{main}

\clearpage
\appendix
\section{Appendix}

\subsection*{Proof of Lemma~\ref{lem:bridge-interval}.}
\begin{proof}
Recall that we have normalized the facility locations to $x_F=1$ and $y_F=0$. Consider any bridge position $s\notin[0,1]$. We show that it can be moved to a point in $[0,1]$ without increasing any agent's individual cost.

If $s<0$, then for any agent $i\in M$ who is interested in $F_2$, $d_{i2}^{\Us}(0)=|x_i|\le |x_i-s|+|s|=d_{i2}^{\Us}(s)$. For any agent $j\in N$ who is interested in $F_1$, $d_{j1}^{\Ls}(0)=|y_j|+1\le |y_j-s|+|s|+1=|y_j-s|+1-s=d_{j1}^{\Ls}(s)$. Hence, moving the bridge from ang position $s<0$ to $0$ does not increase any crossing travel cost, while all local travel costs remain unchanged.

The case $s>1$ is analogous. Moving the bridge from $s$ to $1$ does not increase any crossing travel cost, while all local travel costs remain unchanged. Therefore, no agent's individual cost increases under any of the three individual cost variants. Consequently, neither the social cost nor the maximum cost increases. It follows that there exists an optimal bridge position in $[0,1]$.
\qed
\end{proof}

\subsection*{Proof of Lemma~\ref{lem:sc-max-expansion}.}
\begin{proof}
Consider agents in $A$. If $i\in M_2$, then $C_i^{\Us}(s)=|x_i-s|+s$. If $x_i<0$, then $a_i=0$, so $C_i^{\Us}(s)=-x_i+2s=(-x_i)+2(s-a_i)^+$. If $0\le x_i<1$, then $a_i=x_i$. When $s\le x_i$, we have $(s-a_i)^+=0$, so $C_i^{\Us}(s)=x_i=x_i+2(s-a_i)^+$. When $s>x_i$, we have $C_i^{\Us}(s)=2s-x_i=x_i+2(s-a_i)^+$. If $x_i\ge1$, then $a_i=1$. Since $s\in[0,1]$, we have $(s-a_i)^+=0$, so $C_i^{\Us}(s)=x_i=x_i+2(s-a_i)^+$. Therefore, for every $i\in M_2$, there exists a constant $\alpha_i$ independent of $s$ such that $C_i^{\Us}(s)=\alpha_i+2(s-a_i)^+$.

If $i\in M_{12}$, then $C_i^{\Us}(s)=\max\{|x_i-1|,|x_i-s|+s\}$. If $x_i<1/2$, then $a_i=1/2$. When $s\le1/2$, we have $(s-a_i)^+=0$ and $|x_i-s|+s\le1-x_i$, so $C_i^{\Us}(s)=1-x_i=(1-x_i)+2(s-a_i)^+$. When $s>1/2$, we have $|x_i-s|+s=2s-x_i>1-x_i$, so $C_i^{\Us}(s)=2s-x_i=(1-x_i)+2(s-a_i)^+$. If $1/2\le x_i<1$, then $a_i=x_i$. When $s\le x_i$, we have $(s-a_i)^+=0$ and $|x_i-s|+s=x_i\ge1-x_i$, so $C_i^{\Us}(s)=x_i=x_i+2(s-a_i)^+$. When $s>x_i$, we have $|x_i-s|+s=2s-x_i\ge x_i\ge1-x_i$, so $C_i^{\Us}(s)=2s-x_i=x_i+2(s-a_i)^+$. If $x_i\ge1$, then $a_i=1$. Since $s\in[0,1]$, we have $(s-a_i)^+=0$ and $|x_i-s|+s=x_i\ge x_i-1$, so $C_i^{\Us}(s)=x_i=x_i+2(s-a_i)^+$. Therefore, for every $i\in M_{12}$, there exists a constant $\alpha_i$ independent of $s$ such that $C_i^{\Us}(s)=\alpha_i+2(s-a_i)^+$.

The analysis for agents in $B$ is similar.
\qed
\end{proof}

\subsection*{Proof of Theorem~\ref{appstmt:1}.}
\begin{proof}
We first prove strategyproofness. By Remark~\ref{rem:outside-constant}, regardless of the mechanism's output, bridge-independent agents cannot benefit from misreporting. We show that regardless of the reports of all other agents, no agent $i\in A$ satisfying $x_i<1$ can benefit from misreporting. The analysis for any agent $j\in B$ satisfying $y_j>0$ is similar. For any such agent $i$, let $s_0$ be the output of Mechanism 1 when she reports truthfully, and let $s'$ be the output after any misreport by $i$, while all other reports remain unchanged. Here, $A$ and $B$ are the sets determined when agent $i$ reports truthfully. Since agent $i$ is on $L_1$, her misreport does not affect $B$.

If $|B|=0$, then Mechanism 1 outputs $s_0=0$. Since $B$ remains unchanged after any misreport by agent $i$, the mechanism also outputs $s'=0$. In this case, agent $i$ attains her minimum cost on $[0,1]$ at $s_0=0$. Therefore, agent $i$ cannot benefit from misreporting.

Now suppose $|B|>0$. When agent $i$ reports truthfully, Mechanism 1 outputs $s_0=e_{|B|}$. By Lemma~\ref{lem:sc-max-expansion}, $C_i^{\Us}(s)$ reaches its minimum on $[0,1]$ when $s\le a_i$. Thus, if $s_0\le a_i$, the agent cannot benefit. It remains to consider the case $s_0>a_i$.

If the agent's reported preference still contains $F_2$ after the misreport, then her report still contributes one breakpoint to the breakpoint multiset. The misreport only changes her breakpoint from $a_i$ to some reported breakpoint $a_i'\in[0,1]$. If $a_i'\le s_0$, then, in the resulting multiset, there are still at most $|B|-1$ breakpoints strictly smaller than $s_0$ and at least $|B|$ breakpoints no greater than $s_0$. Hence, the $|B|$-th breakpoint remains $s_0$, so $s'=s_0$. If $a_i'>s_0$, then the new $|B|$-th breakpoint cannot be smaller than $s_0$, so $s'\ge s_0$. Since $a_i<s_0\le s'$, Lemma~\ref{lem:sc-max-expansion} gives $C_i^{\Us}(s')\ge C_i^{\Us}(s_0)$. Hence, the agent cannot benefit.

If the agent's reported preference does not contain $F_2$ after the misreport, then her breakpoint $a_i<s_0=e_{|B|}$ is removed, while the index $|B|$ does not change. Hence, the new output satisfies $s'\ge s_0$. Since $s_0>a_i$, Lemma~\ref{lem:sc-max-expansion} gives $C_i^{\Us}(s')\ge C_i^{\Us}(s_0)$. Hence, the agent cannot benefit.

We next prove optimality. For every $s\in[0,1]$, the social cost is
\[
\textstyle
\SC(s)=\sum\limits_{i\in M_1} C_i^{\Us}(s)+\sum\limits_{j\in N_2} C_j^{\Ls}(s)+\sum\limits_{i\in A}\bigl(\alpha_i+2(s-a_i)^+\bigr)+\sum\limits_{j\in B}\bigl(\beta_j+2(b_j-s)^+\bigr).
\]
Let $\SC_0=\sum_{i\in M_1} C_i^{\Us}(s)+\sum_{j\in N_2} C_j^{\Ls}(s)+\sum_{i\in A}\alpha_i+\sum_{j\in B}\beta_j$, which is independent of $s$, and let $\Phi(s)=\sum_{i\in A}(s-a_i)^++\sum_{j\in B}(b_j-s)^+$. Then $\SC(s)=\SC_0+2\Phi(s)$, so minimizing $\SC(s)$ is equivalent to minimizing $\Phi(s)$. Let $s_0$ be the output of Mechanism 1. By Lemma~\ref{lem:bridge-interval}, it suffices to prove that $\SC(s_0)\le\SC(s)$ for every $s\in[0,1]$.

If $|B|=0$, then $\Phi(s)=\sum_{i\in A}(s-a_i)^+$, which is nondecreasing on $[0,1]$. Mechanism 1 outputs $s_0=0$, and hence $s_0$ minimizes $\Phi$.

Suppose $|B|>0$. Mechanism 1 outputs $s_0=e_{|B|}$. Since every term in $\Phi$ is convex, $\Phi$ is convex. Its left and right derivatives at any $s$ are $\Phi'_-(s)=|\{e\in K:e<s\}|-|B|$ and $\Phi'_+(s)=|\{e\in K:e\le s\}|-|B|$, respectively, where breakpoint multiplicities are counted. Since $s_0=e_{|B|}$, at most $|B|-1$ breakpoints are strictly smaller than $s_0$, while at least $|B|$ breakpoints are no greater than $s_0$. Hence, $\Phi'_-(s_0)\le0\le\Phi'_+(s_0)$. Therefore, $0$ belongs to the subdifferential of $\Phi$ at $s_0$, and $s_0$ minimizes $\Phi$ on $[0,1]$. Thus, $\Phi(s_0)\le\Phi(s)$ for all $s\in[0,1]$. Hence, Mechanism 1 minimizes the social cost.
\qed
\end{proof}

\subsection*{Proof of Lemma~\ref{lem:sc-sum-expansion}.}
\begin{proof}
Consider agents in $A$. For every $i\in M_2$, the claim follows directly from the corresponding analysis in the proof of Lemma~\ref{lem:sc-max-expansion}, since the cost function is identical.

If $i\in M_{12}$, then $C_i^{\Us}(s)=d_{i1}^{\Us}(s)+d_{i2}^{\Us}(s)$. If $x_i<0$, then $a_i=0$, so $C_i^{\Us}(s)=d_{i1}^{\Us}(s)-x_i+2s=(d_{i1}^{\Us}(s)-x_i)+2(s-a_i)^+$. If $0\le x_i<1$, then $a_i=x_i$. When $s\le x_i$, we have $(s-a_i)^+=0$, so $C_i^{\Us}(s)=d_{i1}^{\Us}(s)+x_i=(d_{i1}^{\Us}(s)+x_i)+2(s-a_i)^+$. When $s>x_i$, we have $C_i^{\Us}(s)=d_{i1}^{\Us}(s)+2s-x_i=(d_{i1}^{\Us}(s)+x_i)+2(s-a_i)^+$. If $x_i\ge1$, then $a_i=1$. Since $s\in[0,1]$, we have $(s-a_i)^+=0$, so $C_i^{\Us}(s)=d_{i1}^{\Us}(s)+x_i=(d_{i1}^{\Us}(s)+x_i)+2(s-a_i)^+$. Therefore, for every $i\in M_{12}$, there exists a constant $\alpha_i$ independent of $s$ such that $C_i^{\Us}(s)=\alpha_i+2(s-a_i)^+$.

The analysis for agents in $B$ is similar.
\qed
\end{proof}

\subsection*{Proof of Theorem~\ref{appstmt:3}.}
\begin{proof}
By Lemma~\ref{lem:sc-sum-expansion}, the individual cost functions under the sum-variant have the same functional form in terms of their respective breakpoints as those under the max-variant in Lemma~\ref{lem:sc-max-expansion}. Moreover, Mechanism 2 selects the $|B|$-th breakpoint, as Mechanism 1 does. The strategyproofness proof of Theorem~\ref{appstmt:1} relies only on this form of the individual cost functions and on the selection of the $|B|$-th breakpoint. Therefore, the same argument applies directly to Mechanism 2.

For optimality, Lemma~\ref{lem:sc-sum-expansion} implies that the social cost can be written as $\SC(s)=\SC_0+2\Phi(s)$, where $\SC_0$ is independent of $s$ and $\Phi(s)=\sum_{i\in A}(s-a_i)^++\sum_{j\in B}(b_j-s)^+$. This convex function has the same form as the one considered in the proof of Theorem~\ref{appstmt:1}. Since Mechanism 2 selects the $|B|$-th breakpoint, the same left and right derivative argument shows that its output minimizes $\Phi$, and hence minimizes the social cost. Therefore, Mechanism 2 is strategyproof and optimal for the social cost objective under the sum-variant individual cost.
\qed
\end{proof}

\subsection*{Proof of Lemma~\ref{lem:sc-min-auxiliary}.}
\begin{proof}
Consider agents in $A_{\min}$. For every $i\in A_{\min}$, $d_{i2}^{\Us}(s)=|x_i-s|+s$. For every $i\in M_2$, the claim follows directly from the corresponding analysis in the proof of Lemma~\ref{lem:sc-max-expansion}.

Now consider $i\in\{h\in M_{12}:x_h<1/2\}$. If $x_i<0$, then $a_i=0$, so $d_{i2}^{\Us}(s)=-x_i+2s=(-x_i)+2(s-a_i)^+$. If $0\le x_i<1/2$, then $a_i=x_i$. When $s\le x_i$, we have $(s-a_i)^+=0$, so $d_{i2}^{\Us}(s)=x_i=x_i+2(s-a_i)^+$. When $s>x_i$, we have $d_{i2}^{\Us}(s)=2s-x_i=x_i+2(s-a_i)^+$. Therefore, for every $i\in\{h\in M_{12}:x_h<1/2\}$, there exists a constant $\alpha_i$ independent of $s$ such that $d_{i2}^{\Us}(s)=\alpha_i+2(s-a_i)^+$.

The analysis for agents in $B_{\min}$ is similar.
\qed
\end{proof}

\subsection*{Proof of Theorem~\ref{appstmt:5}.}
\begin{proof}
We first prove strategyproofness. By Remark~\ref{rem:outside-constant} and the definition of $A_{\min}$ and $B_{\min}$, all agents outside these sets have costs independent of the bridge position and cannot benefit from misreporting. For the agents in $A_{\min}$ and $B_{\min}$, Lemma~\ref{lem:sc-min-auxiliary} yields the same breakpoint monotonicity used in the proof of Theorem~\ref{appstmt:1}, and Mechanism 3 uses the same order-statistic rule. Hence, the strategyproofness argument is analogous to that for Mechanism 1, and we omit the details.

We next analyze the approximation ratio. Using Lemma~\ref{lem:sc-min-auxiliary},  define
\[
\textstyle
H(s)=\sum_{i\in A_{\min}}\bigl(\alpha_i+2(s-a_i)^+\bigr)+\sum_{j\in B_{\min}}\bigl(\beta_j+2(b_j-s)^+\bigr),
\]
for every $s\in[0,1]$. Let $\SC_0=\sum_{i\in A_{\min}}\alpha_i+\sum_{j\in B_{\min}}\beta_j$, and let $\Phi(s)=\sum_{i\in A_{\min}}(s-a_i)^++\sum_{j\in B_{\min}}(b_j-s)^+$. Then $H(s)=\SC_0+2\Phi(s)$. Since $\SC_0$ does not depend on $s$, minimizing $H(s)$ is equivalent to minimizing $\Phi(s)$.

Mechanism 3 outputs $s_0=0$ if $|B_{\min}|=0$, and $s_0=e_{|B_{\min}|}$ otherwise. By the same argument as in the proof of Theorem~\ref{appstmt:1}, $s_0$ minimizes $\Phi(s)$ on $[0,1]$. Hence, $s_0$ also minimizes $H(s)$.

Let $s^*\in[0,1]$ be an optimal bridge position for the social cost. We next derive an upper bound on the social cost in terms of $H$. For every $s\in[0,1]$, we have
\[
\textstyle
\SC(s)=\sum_{i\in M\setminus A_{\min}} C_i^{\Us}(s)+\sum_{j\in N\setminus B_{\min}} C_j^{\Ls}(s)+\sum_{i\in A_{\min}} C_i^{\Us}(s)+\sum_{j\in B_{\min}} C_j^{\Ls}(s).
\]
Let $\SC_0'=\sum_{i\in M\setminus A_{\min}} C_i^{\Us}(s)+\sum_{j\in N\setminus B_{\min}} C_j^{\Ls}(s)$, which is independent of $s$ by the definitions of $A_{\min}$ and $B_{\min}$. Since each min-variant individual cost is no larger than the corresponding crossing travel cost, we have
\begin{equation}
\SC(s_0)=\SC_0'+\sum_{i\in A_{\min}}C_i^{\Us}(s_0)+\sum_{j\in B_{\min}}C_j^{\Ls}(s_0)\leq \SC_0'+H(s_0)\leq \SC_0'+H(s^*).
\label{eq:sc-min-upper}
\end{equation}
For $i\in M_2$, we have $C_i^{\Us}(s^*)=d_{i2}^{\Us}(s^*)$. For $i\in M_{12}\cap A_{\min}$, we have $x_i<1/2$. Since $s^*\in[0,1]$, $d_{i2}^{\Us}(s^*)=|x_i-s^*|+s^*\leq 2-x_i$. Since $x_i<1/2$, we have $2-x_i\leq3(1-x_i)=3d_{i1}^{\Us}(s^*)$. Together with $C_i^{\Us}(s^*)=\min\{d_{i1}^{\Us}(s^*),d_{i2}^{\Us}(s^*)\}$, this gives $d_{i2}^{\Us}(s^*)\leq3C_i^{\Us}(s^*)$ for every $i\in A_{\min}$. Therefore,
\begin{equation}
\sum_{i\in A_{\min}}d_{i2}^{\Us}(s^*)\leq3\sum_{i\in A_{\min}}C_i^{\Us}(s^*).
\label{eq:sc-min-A}
\end{equation}
The argument for $j\in B_{\min}$ is symmetric. Hence,
\begin{equation}
\sum_{j\in B_{\min}}d_{j1}^{\Ls}(s^*)\leq3\sum_{j\in B_{\min}}C_j^{\Ls}(s^*).
\label{eq:sc-min-B}
\end{equation}
Combining Eqs.~\eqref{eq:sc-min-upper}--\eqref{eq:sc-min-B} gives
\[
\begin{aligned}
\textstyle
\SC(s_0)&\leq \SC_0'+H(s^*)\\
&=\SC_0'+\sum_{i\in A_{\min}}d_{i2}^{\Us}(s^*)+\sum_{j\in B_{\min}}d_{j1}^{\Ls}(s^*)\\
&\leq \SC_0'+3\sum_{i\in A_{\min}}C_i^{\Us}(s^*)+3\sum_{j\in B_{\min}}C_j^{\Ls}(s^*)\\
&\leq 3\left(\SC_0'+\sum_{i\in A_{\min}}C_i^{\Us}(s^*)+\sum_{j\in B_{\min}}C_j^{\Ls}(s^*)\right)=3\SC(s^*).
\end{aligned}
\]

\noindent\textbf{Tight Example.}
Take any $\varepsilon\in(0,1/2)$. Consider an instance with two agents: an agent $i\in M_{12}$ on $L_1$ at $x_i=\frac12-\varepsilon$, and an agent $j\in N_1$ on $L_2$ at $y_j=1$. Under Mechanism 3, agent $i$ is counted in $A_{\min}$ with breakpoint $a_i=x_i$, while agent $j$ is counted in $B_{\min}$ with breakpoint $b_j=1$. Since $|B_{\min}|=1$, the mechanism outputs $s_0=a_i=\frac12-\varepsilon$. Then $C_i^{\Us}(s_0)=\min\{1-x_i,|x_i-s_0|+s_0\}=x_i=\frac12-\varepsilon$, and $C_j^{\Ls}(s_0)=|1-s_0|+1-s_0=2(1-s_0)=1+2\varepsilon$. Thus $\SC(s_0)=\frac32+\varepsilon$. On the other hand, at $s=1$, $C_i^{\Us}(1)=\min\{1-x_i,2-x_i\}=1-x_i=\frac12+\varepsilon$ and $C_j^{\Ls}(1)=0$, so $\OPT_{\SC}(\mathbf c)\le\frac12+\varepsilon$. Therefore $\SC(s_0)/\OPT_{\SC}(\mathbf c)\ge (3/2+\varepsilon)/(1/2+\varepsilon)\to 3$ as $\varepsilon\to0^+$.
\qed
\end{proof}

\subsection*{Proof of Theorem~\ref{thm:sc-min-both-lb2}.}
\begin{proof}
Assume that there exists a deterministic strategyproof mechanism $f$ whose approximation ratio is smaller than $2$. Consider three instances, each with two agents. In each instance, agent 1 is on $L_1$ at $x_1=0$, and agent 2 is on $L_2$ at $y_2=1$.

In instance $\mathbf c_1$, agent 1 only prefers $F_2$, and agent 2 prefers both facilities. Let $s_1=f(\mathbf c_1)$. For every bridge position $s$, the social cost is $\SC(\mathbf c_1,s)=2|s|+\min\{1,2|1-s|\}$. The optimal social cost is $1$, attained at $s=0$. Since $f$ has approximation ratio smaller than $2$, we have $\SC(\mathbf c_1,s_1)<2$. If $s_1\leq-1/2$, then $2|s_1|\geq1$ and $\min\{1,2|1-s_1|\}=1$, giving $\SC(\mathbf c_1,s_1)\geq2$. If $1/2\leq s_1\leq1$, then $\SC(\mathbf c_1,s_1)=2s_1+2(1-s_1)=2$. If $s_1>1$, then $2|s_1|>2$. All three cases contradict $\SC(\mathbf c_1,s_1)<2$. Hence, $-1/2<s_1<1/2$.

In instance $\mathbf c_2$, both agents prefer both facilities. Let $s_2=f(\mathbf c_2)$.

In instance $\mathbf c_3$, agent 1 prefers both facilities, and agent 2 only prefers $F_1$. Let $s_3=f(\mathbf c_3)$. For every bridge position $s$, the social cost is $\SC(\mathbf c_3,s)=\min\{1,2|s|\}+2|1-s|$. The optimal social cost is $1$, attained at $s=1$. Thus, $\SC(\mathbf c_3,s_3)<2$. If $s_3<0$, then $2|1-s_3|>2$. If $0\leq s_3\leq1/2$, then $\SC(\mathbf c_3,s_3)=2s_3+2(1-s_3)=2$. If $s_3\geq3/2$, then $\min\{1,2|s_3|\}=1$ and $2|1-s_3|\geq1$. Each case gives $\SC(\mathbf c_3,s_3)\geq2$, which is impossible. Therefore, $1/2<s_3<3/2$.

Now compare $\mathbf c_1$ and $\mathbf c_2$ for agent 1. In $\mathbf c_1$, if agent 1 reports that she prefers both facilities, the resulting reported instance is exactly $\mathbf c_2$, so the mechanism outputs $s_2$. Therefore, strategyproofness gives $2|s_1|\leq2|s_2|$, and hence $|s_1|\leq|s_2|$. Conversely, in $\mathbf c_2$, if she reports that she only prefers $F_2$, the resulting reported instance is exactly $\mathbf c_1$, so the mechanism outputs $s_1$. Therefore, strategyproofness gives $\min\{1,2|s_2|\}\leq\min\{1,2|s_1|\}$. Since $-1/2<s_1<1/2$, the right-hand side equals $2|s_1|<1$. Hence, $|s_2|<1/2$, and the inequality becomes $2|s_2|\leq2|s_1|$. Together with $|s_1|\leq|s_2|$, we obtain $|s_2|=|s_1|<1/2$.

Finally, consider agent 2 in $\mathbf c_2$. Her cost at bridge position $s$ is $\min\{1,2|1-s|\}$. Since $|s_2|<1/2$, we have $|1-s_2|>1/2$, so her cost at $s_2$ is $1$. If she reports that she only prefers $F_1$, the resulting reported instance is exactly $\mathbf c_3$, so the mechanism outputs $s_3$. Since $1/2<s_3<3/2$, we have $|1-s_3|<1/2$, and her cost after this report is $\min\{1,2|1-s_3|\}=2|1-s_3|<1$. Thus, agent 2 can reduce her cost by misreporting, contradicting strategyproofness.

Therefore, no deterministic strategyproof mechanism for the social cost objective can have an approximation ratio smaller than $2$. This completes the proof.
\qed
\end{proof}

\subsection*{Proof of Theorem~\ref{appstmt:9}.}
\begin{proof}
Assume that there exists a deterministic strategyproof mechanism $f$ whose approximation ratio is smaller than $5/3$.

Consider instance $\mathbf c_1$. Agent 1 is on $L_1$ at $x_1=-3/2$ and only prefers $F_2$; agent 2 is on $L_2$ at $y_2=5/2$ and only prefers $F_1$. Let $h=f(\mathbf c_1)$. In instance $\mathbf c_1$, for every bridge position $s$,  $C_1^{\mathrm U}(s)=|-3/2-s|+|s|$ and $C_2^{\mathrm L}(s)=|5/2-s|+|1-s|$. The optimal maximum cost of $\mathbf c_1$ is $5/2$, attained at $s=1/2$. Since the approximation ratio of $f$ is smaller than $5/3$, we have $\MC(\mathbf c_1,h)<25/6$. If $h\leq-1/3$, then $C_2^{\mathrm L}(h)=7/2-2h\geq25/6$, which contradicts $\MC(\mathbf c_1,h)<25/6$. If $h\geq4/3$, then $C_1^{\mathrm U}(h)=3/2+2h\geq25/6$, which also contradicts $\MC(\mathbf c_1,h)<25/6$. Therefore $-1/3<h<4/3$.

First suppose $-1/3<h\leq1/2$. Define instance $\mathbf c_2$ by changing only agent 1's location from $-3/2$ to $h$. Let $s_2=f(\mathbf c_2)$. In instance $\mathbf c_2$, $C_1^{\mathrm U}(s)=|h-s|+|s|$, whose minimum value is $|h|$, attained exactly on $[\min\{h,0\},\max\{h,0\}]$. The optimal maximum cost of $\mathbf c_2$ is $7/4-h/2$, attained at $s=7/8+h/4$. Hence strategyproofness implies $s_2\in[\min\{h,0\},\max\{h,0\}]$; otherwise agent 1 could report $-3/2$. Under this report, the mechanism receives the same reported profile as in $\mathbf c_1$ and therefore outputs $h$, reducing her cost to $|h|$.

If $-1/3<h\leq0$, then $s_2\leq0$, so $C_2^{\mathrm L}(s_2)=7/2-2s_2\geq7/2$. Thus $\MC(\mathbf c_2,s_2)/(7/4-h/2)\geq(7/2)/(7/4-h/2)>5/3$, where the last inequality follows from $h>-1/3$. If $0<h\leq1/2$, then $0\leq s_2\leq h$, so $C_2^{\mathrm L}(s_2)=7/2-2s_2\geq7/2-2h$. Hence $\MC(\mathbf c_2,s_2)/(7/4-h/2)\geq(7/2-2h)/(7/4-h/2)\geq5/3$, where the last inequality follows from $h\leq1/2$. In both cases, the approximation ratio of $f$ on $\mathbf c_2$ is at least $5/3$, contradicting the assumption that its approximation ratio is smaller than $5/3$.

The case $1/2<h<4/3$ is symmetric. By changing only agent 2's location from $5/2$ to $h$, the same argument gives an instance on which the approximation ratio of $f$ is at least $5/3$, contradicting the assumption that its approximation ratio is smaller than $5/3$.

Therefore, no deterministic strategyproof mechanism for the maximum cost objective can have an approximation ratio smaller than $5/3$. This completes the proof.
\qed
\end{proof}

\subsection*{Proof of Theorem~\ref{appstmt:7}.}
\begin{proof}
We first prove strategyproofness. By Remark~\ref{rem:outside-constant}, regardless of the mechanism's output, bridge-independent agents cannot benefit from misreporting. We show that regardless of the reports of all other agents, no agent $i\in A$ satisfying $x_i<1$ can benefit from misreporting. The analysis for any agent $j\in B$ satisfying $y_j>0$ is similar. For any such agent $i$, let $s_0$ be the output when she reports truthfully, and let $s'$ be the output after any misreport by $i$, while all other reports remain unchanged. Here, $A$ and $B$ are the sets determined when agent $i$ reports truthfully. Since agent $i$ is on $L_1$, her misreport does not affect $B$.

If $B=\varnothing$, the mechanism outputs $s_0=0$. Since $B$ remains unchanged after any misreport by agent $i$, the mechanism also outputs $s'=0$. Then agent $i$ attains her minimum cost on $[0,1]$ at $s_0=0$. Hence, the agent cannot benefit.

Now consider the case $B\ne\varnothing$. Since agent $i$ belongs to $A$ when she reports truthfully, the mechanism outputs $s_0=\med\{0,1,\med\{a,b,1/2\}\}$. If $i\in M_2$, then $C_i^{\mathrm U}(s)=d_{i2}^{\mathrm U}(s)=|x_i-s|+s$; if $i\in M_{12}$, then $C_i^{\mathrm U}(s)=\max\{d_{i1}^{\mathrm U}(s),d_{i2}^{\mathrm U}(s)\}=\max\{1-x_i,|x_i-s|+s\}$. Hence, in either case, $C_i^{\mathrm U}(s)$ is minimized on $[0,1]$ whenever $s\le x_i$. Thus, if $s_0\le x_i$, the agent cannot benefit from misreporting. It remains to consider the case $s_0>x_i$.

If the agent's reported preference still contains $F_2$ after the misreport, then her report is still included when determining the new value $a'$, and only the value $a=\min_{h\in A}x_h$ may change. If $a'\geq a$, the median form implies that the output cannot be smaller than $s_0$, so $s'\geq s_0$. If $a'<a$, then $a\leq x_i<s_0$, so $a$ is to the left of $s_0$; moving $a$ further left does not change the median output, hence $s'=s_0$. In either case, $s'\geq s_0$. Since $x_i<s_0\leq s'$, we have $C_i^{\mathrm U}(s')\geq C_i^{\mathrm U}(s_0)$. Hence, the agent cannot benefit.

If the agent's reported preference no longer contains $F_2$ after the misreport, then she is no longer counted when determining the new value $a'$. If the reported set A remains nonempty, the new value $a'$ satisfies $a'\geq a$, and the median output cannot be smaller than $s_0$. If the reported set A becomes empty, the mechanism outputs $s'=1$, which is also not smaller than $s_0$. Hence, $s'\geq s_0$. Since $x_i<s_0\leq s'$, we have $C_i^{\mathrm U}(s')\geq C_i^{\mathrm U}(s_0)$. Thus, the agent cannot benefit.

We next analyze the approximation ratio. Given an instance $\mathbf c$, let $s_0$ be the output. If $B=\varnothing$, then $s_0=0$. At this position, every agent whose cost depends on the bridge position attains her minimum cost. Hence, the output is optimal. Similarly, if $A=\varnothing$ and $B\ne\varnothing$, then $s_0=1$, which is optimal. Thus assume $A\ne\varnothing$ and $B\ne\varnothing$.

Fix any $s\in[0,1]$. For every $i\in A$, $d_{i2}^{\Us}(s)=|x_i-s|+s$. If $s\le x_i$, then $d_{i2}^{\Us}(s)=x_i$; if $s>x_i$, then $d_{i2}^{\Us}(s)=2s-x_i\le2s-a$. Similarly, for every $j\in B$, $d_{j1}^{\Ls}(s)=|y_j-s|+1-s$. If $s\ge y_j$, then $d_{j1}^{\Ls}(s)=1-y_j$; if $s<y_j$, then $d_{j1}^{\Ls}(s)=y_j+1-2s\le b+1-2s$.

Combining these bounds with the local travel costs, and using the convention that the maximum over an empty set is $0$, we have
\[
\begin{aligned}
\MC(s)
&=\max\left\{
\begin{aligned}
&\max_{i\in M_1\cup M_{12}}|x_i-1|,
\max_{j\in N_2\cup N_{12}}|y_j|,\\
&\max_{i\in A}d_{i2}^{\Us}(s),
\max_{j\in B}d_{j1}^{\Ls}(s)
\end{aligned}
\right\}\\
&\le \max\left\{
\begin{aligned}
&\max_{i\in M_1\cup M_{12}}|x_i-1|,
\max_{j\in N_2\cup N_{12}}|y_j|,\\
&\max_{\substack{i\in A\\ x_i\ge0}}x_i,
\max_{\substack{j\in B\\ y_j\le1}}(1-y_j),
2s-a,
 b+1-2s
\end{aligned}
\right\}.
\end{aligned}
\]
Define
\[
\MC_0=\max\left\{
\begin{aligned}
&\max_{i\in M_1\cup M_{12}}|x_i-1|,
\max_{j\in N_2\cup N_{12}}|y_j|,\\
&\max_{\substack{i\in A\\ x_i\ge0}}x_i,
\max_{\substack{j\in B\\ y_j\le1}}(1-y_j)
\end{aligned}
\right\}.
\]
Thus, for every $s\in[0,1]$,
\begin{equation}
\MC(s)\le \max\{\MC_0,2s-a,b+1-2s\}.
\label{eq:max-upper}
\end{equation}
Hence, $\OPT_{\MC}(\mathbf c)\ge \MC_0$. If $\MC(s_0)\le\MC_0$, then $\OPT_{\MC}(\mathbf c)\le\MC(s_0)\le\MC_0\le\OPT_{\MC}(\mathbf c)$, so $\MC(s_0)=\OPT_{\MC}(\mathbf c)$. Hence, it remains to consider the case $\MC(s_0)>\MC_0$.

Let $s^*\in[0,1]$ be an optimal bridge position. Let $i_a\in A$ satisfy $x_{i_a}=a$, and let $j_b\in B$ satisfy $y_{j_b}=b$.
\begin{equation}
\OPT_{\MC}(\mathbf c)
\ge
\max\{d_{i_a,2}^{\Us}(s^*),d_{j_b,1}^{\Ls}(s^*)\}
\ge
\frac{d_{i_a,2}^{\Us}(s^*)+d_{j_b,1}^{\Ls}(s^*)}{2}
\ge
\frac{b+1-a}{2}.
\label{eq:max-pair-lb}
\end{equation}
If $a<0$, then $d_{i_a,2}^{\Us}(s)\ge -a$ for every $s\in[0,1]$. If $b>1$, then $d_{j_b,1}^{\Ls}(s)\ge b-1$ for every $s\in[0,1]$. Hence,
\begin{equation}
\OPT_{\MC}(\mathbf c)\ge\max\{-a,b-1,0\}.
\label{eq:max-outside-lb}
\end{equation}

We distinguish the following cases according to the relative positions of $a$, $b$, $0$, $1/2$, and $1$.

\textbf{Case 1.} $a\le1/2\le b$. Then $s_0=1/2$. Since $\MC(s_0)>\MC_0$, Eq.~\eqref{eq:max-upper} gives $\MC(s_0)\le\max\{1-a,b\}$.

If $1-a\ge b$, then $\MC(s_0)\le1-a$. When $a\le-3/2$, Eq.~\eqref{eq:max-outside-lb} gives $\OPT_{\MC}(\mathbf c)\ge-a$. Hence,
\[
\frac{\MC(s_0)}{\OPT_{\MC}(\mathbf c)}
\le
\frac{1-a}{-a}
=
1+\frac{1}{-a}
\le
\frac53.
\]
When $a>-3/2$, Eq.~\eqref{eq:max-pair-lb} and $b\ge1/2$ give $\OPT_{\MC}(\mathbf c)\ge(b+1-a)/2\ge(3/2-a)/2$. Hence,
\[
\frac{\MC(s_0)}{\OPT_{\MC}(\mathbf c)}
\le
\frac{1-a}{(3/2-a)/2}
=
\frac{2(1-a)}{3/2-a}
\le
\frac53.
\]

If $1-a<b$, then $\MC(s_0)\le b$. When $b\le5/2$, Eq.~\eqref{eq:max-pair-lb} and $a\le1/2$ give $\OPT_{\MC}(\mathbf c)\ge(b+1-a)/2\ge(b+1/2)/2$. Hence,
\[
\frac{\MC(s_0)}{\OPT_{\MC}(\mathbf c)}
\le
\frac{b}{(b+1/2)/2}
=
\frac{2b}{b+1/2}
\le
\frac53.
\]
When $b>5/2$, Eq.~\eqref{eq:max-outside-lb} gives $\OPT_{\MC}(\mathbf c)\ge b-1$. Hence,
\[
\frac{\MC(s_0)}{\OPT_{\MC}(\mathbf c)}
\le
\frac{b}{b-1}
=
1+\frac{1}{b-1}
<
\frac53.
\]
Therefore, when $a\le1/2\le b$, we have $\MC(s_0)/\OPT_{\MC}(\mathbf c)\le5/3$.

\textbf{Case 2.} $a\le b\le0$. Then $s_0=0$. At this position, every agent whose cost depends on the bridge position attains her minimum cost. Hence, the output is optimal.

\textbf{Case 3.} $a\le b$ and $0<b\le1/2$. Then $s_0=b$. Every agent in $B$ attains her minimum cost, and every agent $i\in A$ with $x_i\ge b$ also attains her minimum cost. Since $\MC(s_0)>\MC_0$, the maximum cost at $s_0$ can only be attained by an agent $i\in A$ with $x_i<b$. Eq.~\eqref{eq:max-upper} gives $\MC(s_0)\le2b-a$.

If $a\le-3/2$, then Eq.~\eqref{eq:max-outside-lb} gives $\OPT_{\MC}(\mathbf c)\ge-a$. Hence,
\[
\frac{\MC(s_0)}{\OPT_{\MC}(\mathbf c)}
\le
\frac{2b-a}{-a}
=
1+\frac{2b}{-a}
\le
1+\frac{1}{-a}
\le
\frac53.
\]
If $a>-3/2$, then Eq.~\eqref{eq:max-pair-lb} gives $\OPT_{\MC}(\mathbf c)\ge(b+1-a)/2$. Hence,
\[
\frac{\MC(s_0)}{\OPT_{\MC}(\mathbf c)}
\le
\frac{2b-a}{(b+1-a)/2}
=
\frac{2(2b-a)}{b+1-a}
\le
\frac53,
\]
where the last inequality follows from $b\le1/2$ and $a>-3/2$.

\textbf{Case 4.} $1/2\le a\le b$. Then $s_0=\min\{1,a\}$. If $a\ge1$, then $s_0=1$, and every agent whose cost depends on the bridge position attains her minimum cost. Hence, the output is optimal. Otherwise, $s_0=a$, and this case is symmetric to Case 3. Hence, $\MC(s_0)/\OPT_{\MC}(\mathbf c)\le5/3$.

\textbf{Case 5.} $b\le1/2\le a$. Then $s_0=1/2$, and all agents in $A$ and $B$ whose costs depend on the bridge position attain their minimum costs. Hence, the output is optimal.

\textbf{Case 6.} $b\le a\le1/2$. Then $s_0=\max\{0,a\}$, and all agents in $A$ and $B$ whose costs depend on the bridge position attain their minimum costs. Hence, the output is optimal.

\textbf{Case 7.} $1/2\le b\le a$. Then $s_0=\min\{1,b\}$, and this case is symmetric to Case 6. Hence, the output is optimal.

Combining all cases, we have $\MC(s_0)\le(5/3)\OPT_{\MC}(\mathbf c)$.

\noindent\textbf{Tight Example.}
For any $\varepsilon\in(0,1)$, consider an instance with two agents: one agent $i\in M_2$ located at $x_i=-\frac32+\varepsilon$, and one agent $j\in N_1$ located at $y_j=\frac12$. Then $a=-\frac32+\varepsilon$ and $b=\frac12$, so $s_0=\frac12$. At this output, $C_i^{\mathrm U}(s_0)=\left|-\frac32+\varepsilon-\frac12\right|+\frac12=\frac52-\varepsilon$, while $C_j^{\mathrm L}(s_0)=\frac12$. Hence, $\MC(s_0)=\frac52-\varepsilon$. On the other hand, at $s=\varepsilon/4$, both crossing travel costs are $\frac32-\varepsilon/2$, so $\OPT_{\MC}(\mathbf c)\le\frac32-\varepsilon/2$. Therefore, $\MC(s_0)/\OPT_{\MC}(\mathbf c)\ge(\frac52-\varepsilon)/(\frac32-\varepsilon/2)$, which tends to $5/3$ as $\varepsilon\to0^+$.
\qed
\end{proof}

\subsection*{Proof of Theorem~\ref{appstmt:10}.}
\begin{proof}
We first prove strategyproofness. By Remark~\ref{rem:outside-constant}, bridge-independent agents cannot benefit from misreporting. Under the sum-variant individual cost, for every agent in $A$ and $B$, her cost can be written as the corresponding crossing travel cost plus a term independent of the bridge position. Therefore, her cost has the same monotonicity with respect to the bridge position as in the proof of Theorem~\ref{appstmt:7}. Since Mechanism 4 uses the same median rule, the strategyproofness argument is analogous, and we omit the details.

We next analyze the approximation ratio. Given an instance $\mathbf c$, let $s_0$ be the output. If $B=\varnothing$, then $s_0=0$. At this position, every agent whose cost depends on the bridge position attains her minimum cost. Hence, the output is optimal. Similarly, if $A=\varnothing$ and $B\ne\varnothing$, then $s_0=1$, which is optimal. Thus assume $A\ne\varnothing$ and $B\ne\varnothing$.

For each nonempty set among $M_2$, $M_{12}$, $N_1$, and $N_{12}$, define the corresponding extreme location by $x_\ell^2=\min_{i\in M_2}x_i$, $x_\ell^{12}=\min_{i\in M_{12}}x_i$, $y_r^1=\max_{j\in N_1}y_j$, and $y_r^{12}=\max_{j\in N_{12}}y_j$, respectively.

Fix any $s\in[0,1]$. For $i\in M_2$, if $s\le x_i$, then $C_i^{\Us}(s)=d_{i2}^{\Us}(s)=x_i$, while if $s>x_i$, then $C_i^{\Us}(s)=d_{i2}^{\Us}(s)=2s-x_i\le2s-x_\ell^2$. For $i\in M_{12}$, if $x_i<1$, then $C_i^{\Us}(s)=d_{i1}^{\Us}(s)+d_{i2}^{\Us}(s)=\max\{1,1+2s-2x_i\}$, while if $x_i\ge1$, then $C_i^{\Us}(s)=2x_i-1$, which is independent of $s$. Similarly, for $j\in N_1$, if $s\ge y_j$, then $C_j^{\Ls}(s)=d_{j1}^{\Ls}(s)=1-y_j$, while if $s<y_j$, then $C_j^{\Ls}(s)=d_{j1}^{\Ls}(s)=y_j+1-2s\le y_r^1+1-2s$. For $j\in N_{12}$, if $y_j>0$, then $C_j^{\Ls}(s)=d_{j2}^{\Ls}(s)+d_{j1}^{\Ls}(s)=\max\{1,2y_j+1-2s\}$, while if $y_j\le0$, then $C_j^{\Ls}(s)=1-2y_j$, which is independent of $s$. With the convention that the maximum over an empty set is $0$, we have
\[
\begin{aligned}
\MC(s)
&=\max\left\{
\begin{aligned}
&\max_{i\in M_1}|x_i-1|,
\max_{j\in N_2}|y_j|,
\max_{i\in M_2}d_{i2}^{\Us}(s),\\
&\max_{j\in N_1}d_{j1}^{\Ls}(s),
\max_{i\in M_{12}}\bigl(d_{i1}^{\Us}(s)+d_{i2}^{\Us}(s)\bigr),\\
&\max_{j\in N_{12}}\bigl(d_{j2}^{\Ls}(s)+d_{j1}^{\Ls}(s)\bigr)
\end{aligned}
\right\}\\
&\le \max\left\{
\begin{aligned}
&\max_{i\in M_1}|x_i-1|,
\max_{j\in N_2}|y_j|,
\max_{\substack{i\in M_2\\ x_i\ge0}}x_i,\\
&\max_{\substack{j\in N_1\\ y_j\le1}}(1-y_j),
\max_{k\in M_{12}\cup N_{12}}1,
\max_{\substack{i\in M_{12}\\ x_i\ge1}}(2x_i-1),\\
&\max_{\substack{j\in N_{12}\\ y_j\le0}}(1-2y_j),
2s-x_\ell^2,
1+2s-2x_\ell^{12},\\
&y_r^1+1-2s,
2y_r^{12}+1-2s
\end{aligned}
\right\}.
\end{aligned}
\]
Define
\[
\begin{aligned}
\MC_0=\max\{&
\max_{i\in M_1}|x_i-1|,
\max_{j\in N_2}|y_j|,
\max_{\substack{i\in M_2\\ x_i\ge0}}x_i,
\max_{\substack{j\in N_1\\ y_j\le1}}(1-y_j),\\
&
\max_{k\in M_{12}\cup N_{12}}1,
\max_{\substack{i\in M_{12}\\ x_i\ge1}}(2x_i-1),
\max_{\substack{j\in N_{12}\\ y_j\le0}}(1-2y_j)\}.
\end{aligned}
\]
Thus, for every $s\in[0,1]$,
\begin{equation}
\begin{aligned}
\MC(s)\le \max\{&
\MC_0,
2s-x_\ell^2,
1+2s-2x_\ell^{12},\\
&
y_r^1+1-2s,
2y_r^{12}+1-2s\}.
\end{aligned}
\label{eq:sum-upper}
\end{equation}
Hence, $\OPT_{\MC}(\mathbf c)\ge \MC_0$. If $\MC(s_0)\le\MC_0$, then $\OPT_{\MC}(\mathbf c)\le\MC(s_0)\le\MC_0\le\OPT_{\MC}(\mathbf c)$, so $\MC(s_0)=\OPT_{\MC}(\mathbf c)$. Hence, it remains to consider the case $\MC(s_0)>\MC_0$.

Let $s^*\in[0,1]$ be an optimal bridge position. Whenever the relevant set is nonempty, choose an agent attaining the corresponding extreme location: let $i_2\in M_2$ satisfy $x_{i_2}=x_\ell^2$, let $i_{12}\in M_{12}$ satisfy $x_{i_{12}}=x_\ell^{12}$, let $j_1\in N_1$ satisfy $y_{j_1}=y_r^1$, and let $j_{12}\in N_{12}$ satisfy $y_{j_{12}}=y_r^{12}$.

If $M_2$ and $N_1$ are nonempty, then, by considering agents $i_2$ and $j_1$, we have
\begin{equation}
\OPT_{\MC}(\mathbf c)
\ge
\frac{|x_\ell^2-s^*|+s^*+|y_r^1-s^*|+1-s^*}{2}
\ge
\frac{1+y_r^1-x_\ell^2}{2}.
\label{eq:sum-M2-N1}
\end{equation}
If $M_2$ and $N_{12}$ are nonempty, then, by considering agents $i_2$ and $j_{12}$, we have
\begin{equation}
\OPT_{\MC}(\mathbf c)
\ge
\frac{|x_\ell^2-s^*|+s^*+y_r^{12}+|y_r^{12}-s^*|+1-s^*}{2}
\ge
\frac{1+2y_r^{12}-x_\ell^2}{2}.
\label{eq:sum-M2-N12}
\end{equation}
If $M_{12}$ and $N_1$ are nonempty, then, by considering agents $i_{12}$ and $j_1$, we have
\begin{equation}
\OPT_{\MC}(\mathbf c)
\ge
\frac{1-x_\ell^{12}+|x_\ell^{12}-s^*|+s^*+|y_r^1-s^*|+1-s^*}{2}
\ge
1+\frac{y_r^1}{2}-x_\ell^{12}.
\label{eq:sum-M12-N1}
\end{equation}
If $M_{12}$ and $N_{12}$ are nonempty, then, by considering agents $i_{12}$ and $j_{12}$, we have
\begin{equation}
\begin{aligned}
\OPT_{\MC}(\mathbf c)
&\ge
\frac{1-x_\ell^{12}+|x_\ell^{12}-s^*|+s^*+y_r^{12}+|y_r^{12}-s^*|+1-s^*}{2}\\
&\ge
1+y_r^{12}-x_\ell^{12}.
\end{aligned}
\label{eq:sum-M12-N12}
\end{equation}
Moreover, whenever the corresponding sets are nonempty, the boundary agents give $\OPT_{\MC}(\mathbf c)\ge -x_\ell^2$, $\OPT_{\MC}(\mathbf c)\ge1-2x_\ell^{12}$, $\OPT_{\MC}(\mathbf c)\ge y_r^1-1$, and $\OPT_{\MC}(\mathbf c)\ge2y_r^{12}-1$. We refer to these four inequalities collectively as the boundary lower bounds.

We distinguish the following cases according to the relative positions of $a$, $b$, $0$, $1/2$, and $1$.

\textbf{Case 1.} $a\le1/2\le b$. Then $s_0=1/2$. Since $\MC(s_0)>\MC_0$, Eq.~\eqref{eq:sum-upper} gives $\MC(s_0)\le\max\{1-x_\ell^2,2-2x_\ell^{12},y_r^1,2y_r^{12}\}$, where terms corresponding to empty sets are omitted.

\textbf{Case 1.1.} Consider agents in $M_2$. If the maximum cost is attained by an agent $i\in M_2$, then $x_i<1/2$, since otherwise $i$ attains her minimum cost. In this case, $C_i^{\Us}(1/2)=1-x_i\le1-x_\ell^2$. The boundary lower bound gives $\OPT_{\MC}(\mathbf c)\ge -x_\ell^2$. If $b=y_r^1$, then Eq.~\eqref{eq:sum-M2-N1} gives $\OPT_{\MC}(\mathbf c)\ge(1+b-x_\ell^2)/2$. If $b=y_r^{12}$, then Eq.~\eqref{eq:sum-M2-N12} gives $\OPT_{\MC}(\mathbf c)\ge(1+2b-x_\ell^2)/2\ge(1+b-x_\ell^2)/2$. Hence, in either case, $\OPT_{\MC}(\mathbf c)\ge(1+b-x_\ell^2)/2$.

If $x_\ell^2\le -(1+b)$, then
\[
\frac{\MC(s_0)}{\OPT_{\MC}(\mathbf c)}
\le
\frac{1-x_\ell^2}{-x_\ell^2}
=
1+\frac{1}{-x_\ell^2}
\le
1+\frac{1}{1+b}
\le
\frac53.
\]
If $x_\ell^2>-(1+b)$, then
\[
\frac{\MC(s_0)}{\OPT_{\MC}(\mathbf c)}
\le
\frac{2(1-x_\ell^2)}{1+b-x_\ell^2}
\le
\frac53.
\]
Thus, this case satisfies $\MC(s_0)\le(5/3)\OPT_{\MC}(\mathbf c)$.

\textbf{Case 1.2.} Consider agents in $M_{12}$. If the maximum cost is attained by an agent $i\in M_{12}$, then $x_i<1/2$, since otherwise $i$ attains her minimum cost. In this case, $C_i^{\Us}(1/2)=2-2x_i\le2-2x_\ell^{12}$. The boundary lower bound gives $\OPT_{\MC}(\mathbf c)\ge1-2x_\ell^{12}$. If $b=y_r^1$, then Eq.~\eqref{eq:sum-M12-N1} gives $\OPT_{\MC}(\mathbf c)\ge1+b/2-x_\ell^{12}$. If $b=y_r^{12}$, then Eq.~\eqref{eq:sum-M12-N12} gives $\OPT_{\MC}(\mathbf c)\ge1+b-x_\ell^{12}\ge1+b/2-x_\ell^{12}$. Hence, in either case, $\OPT_{\MC}(\mathbf c)\ge1+b/2-x_\ell^{12}$.

If $x_\ell^{12}\le -b/2$, then
\[
\frac{\MC(s_0)}{\OPT_{\MC}(\mathbf c)}
\le
\frac{2-2x_\ell^{12}}{1-2x_\ell^{12}}
=
1+\frac{1}{1-2x_\ell^{12}}
\le
1+\frac{1}{1+b}
\le
\frac53.
\]
If $-b/2<x_\ell^{12}<1/2$, then
\[
\frac{\MC(s_0)}{\OPT_{\MC}(\mathbf c)}
\le
\frac{2-2x_\ell^{12}}{1+b/2-x_\ell^{12}}
\le
\frac53.
\]
Thus, this case satisfies $\MC(s_0)\le(5/3)\OPT_{\MC}(\mathbf c)$.

\textbf{Case 1.3.} Consider agents in $N_1$. If the maximum cost is attained by an agent $j\in N_1$, then $y_j>1/2$, since otherwise $j$ attains her minimum cost. In this case, $C_j^{\Ls}(1/2)=y_j\le y_r^1$. The boundary lower bound gives $\OPT_{\MC}(\mathbf c)\ge y_r^1-1$. If $a=x_\ell^2$, then Eq.~\eqref{eq:sum-M2-N1} gives $\OPT_{\MC}(\mathbf c)\ge(1+y_r^1-a)/2\ge(y_r^1+1/2)/2$. If $a=x_\ell^{12}$, then Eq.~\eqref{eq:sum-M12-N1} gives $\OPT_{\MC}(\mathbf c)\ge1+y_r^1/2-a\ge(y_r^1+1/2)/2$. Hence, in either case, $\OPT_{\MC}(\mathbf c)\ge(y_r^1+1/2)/2$.

If $1/2\le y_r^1\le5/2$, then
\[
\frac{\MC(s_0)}{\OPT_{\MC}(\mathbf c)}
\le
\frac{2y_r^1}{y_r^1+1/2}
\le
\frac53.
\]
If $y_r^1>5/2$, then
\[
\frac{\MC(s_0)}{\OPT_{\MC}(\mathbf c)}
\le
\frac{y_r^1}{y_r^1-1}
\le
\frac53.
\]
Thus, this case satisfies $\MC(s_0)\le(5/3)\OPT_{\MC}(\mathbf c)$.

\textbf{Case 1.4.} Consider agents in $N_{12}$. If the maximum cost is attained by an agent $j\in N_{12}$, then $y_j>1/2$, since otherwise $j$ attains her minimum cost. In this case, $C_j^{\Ls}(1/2)=2y_j\le2y_r^{12}$. The boundary lower bound gives $\OPT_{\MC}(\mathbf c)\ge2y_r^{12}-1$. If $a=x_\ell^2$, then Eq.~\eqref{eq:sum-M2-N12} gives $\OPT_{\MC}(\mathbf c)\ge(1+2y_r^{12}-a)/2\ge y_r^{12}+1/4$. If $a=x_\ell^{12}$, then Eq.~\eqref{eq:sum-M12-N12} gives $\OPT_{\MC}(\mathbf c)\ge1+y_r^{12}-a\ge y_r^{12}+1/4$. Hence, in either case, $\OPT_{\MC}(\mathbf c)\ge y_r^{12}+1/4$.

If $1/2\le y_r^{12}\le5/4$, then
\[
\frac{\MC(s_0)}{\OPT_{\MC}(\mathbf c)}
\le
\frac{2y_r^{12}}{y_r^{12}+1/4}
\le
\frac53.
\]
If $y_r^{12}>5/4$, then
\[
\frac{\MC(s_0)}{\OPT_{\MC}(\mathbf c)}
\le
\frac{2y_r^{12}}{2y_r^{12}-1}
\le
\frac53.
\]
Thus, this case satisfies $\MC(s_0)\le(5/3)\OPT_{\MC}(\mathbf c)$. Therefore, when $a\le1/2\le b$, we have $\MC(s_0)\le(5/3)\OPT_{\MC}(\mathbf c)$.

\textbf{Case 2.} $a\le b\le0$. Then $s_0=0$. At this position, every agent whose cost depends on the bridge position attains her minimum cost. Hence, the output is optimal.

\textbf{Case 3.} $a\le b$ and $0<b\le1/2$. Then $s_0=b$. Every agent in $B$ attains her minimum cost, and every agent $i\in A$ with $x_i\ge b$ also attains her minimum cost. Moreover, the two terms in Eq.~\eqref{eq:sum-upper} corresponding to agents in $B$ are at most $\MC_0$. Hence, since $\MC(s_0)>\MC_0$, Eq.~\eqref{eq:sum-upper} gives $\MC(s_0)\le\max\{2b-x_\ell^2,1+2b-2x_\ell^{12}\}$. Therefore, it remains to consider agents $i\in M_2$ with $x_i<b$ and agents $i\in M_{12}$ with $x_i<b$.

\textbf{Case 3.1.} Suppose the maximum cost is attained by an agent $i\in M_2$ with $x_i<b$. For such an agent, $C_i^{\Us}(b)=2b-x_i\le2b-x_\ell^2$. The boundary lower bound gives $\OPT_{\MC}(\mathbf c)\ge -x_\ell^2$. If $b=y_r^1$, then Eq.~\eqref{eq:sum-M2-N1} gives $\OPT_{\MC}(\mathbf c)\ge(1+b-x_\ell^2)/2$. If $b=y_r^{12}$, then Eq.~\eqref{eq:sum-M2-N12} gives $\OPT_{\MC}(\mathbf c)\ge(1+2b-x_\ell^2)/2\ge(1+b-x_\ell^2)/2$. Hence, in either case, $\OPT_{\MC}(\mathbf c)\ge(1+b-x_\ell^2)/2$.

If $x_\ell^2\le -(1+b)$, then
\[
\frac{\MC(s_0)}{\OPT_{\MC}(\mathbf c)}
\le
\frac{2b-x_\ell^2}{-x_\ell^2}
=
1+\frac{2b}{-x_\ell^2}
\le
1+\frac{2b}{1+b}
=
\frac{1+3b}{1+b}
\le
\frac53.
\]
If $x_\ell^2>-(1+b)$, then
\[
\frac{\MC(s_0)}{\OPT_{\MC}(\mathbf c)}
\le
\frac{2(2b-x_\ell^2)}{1+b-x_\ell^2}
\le
\frac53.
\]
Thus, this case satisfies $\MC(s_0)\le(5/3)\OPT_{\MC}(\mathbf c)$.

\textbf{Case 3.2.} Suppose the maximum cost is attained by an agent $i\in M_{12}$ with $x_i<b$. For such an agent, $C_i^{\Us}(b)=1+2b-2x_i\le1+2b-2x_\ell^{12}$. The boundary lower bound gives $\OPT_{\MC}(\mathbf c)\ge1-2x_\ell^{12}$. If $b=y_r^1$, then Eq.~\eqref{eq:sum-M12-N1} gives $\OPT_{\MC}(\mathbf c)\ge1+b/2-x_\ell^{12}$; if $b=y_r^{12}$, then Eq.~\eqref{eq:sum-M12-N12} gives $\OPT_{\MC}(\mathbf c)\ge1+b-x_\ell^{12}$. Hence, in both cases, $\OPT_{\MC}(\mathbf c)\ge1+b/2-x_\ell^{12}$.

If $x_\ell^{12}\le -b/2$, then
\[
\frac{\MC(s_0)}{\OPT_{\MC}(\mathbf c)}
\le
\frac{1+2b-2x_\ell^{12}}{1-2x_\ell^{12}}
=
1+\frac{2b}{1-2x_\ell^{12}}
\le
1+\frac{2b}{1+b}
=
\frac{1+3b}{1+b}
\le
\frac53.
\]
If $-b/2<x_\ell^{12}<b$, then the upper bound on $\MC(s_0)$ and the lower bound on $\OPT_{\MC}(\mathbf c)$ give
\[
\frac{\MC(s_0)}{\OPT_{\MC}(\mathbf c)}
\le
\frac{1+2b-2x_\ell^{12}}{1+b/2-x_\ell^{12}}.
\]
To bound the right-hand side, define $\Phi(x)=(1+2b-2x)/(1+b/2-x)$. Since $\Phi'(x)=(b-1)/(1+b/2-x)^2<0$ for $0<b\le1/2$, $\Phi(x)$ is decreasing on $(-b/2,b)$. Hence,
\[
\frac{\MC(s_0)}{\OPT_{\MC}(\mathbf c)}
\le
\Phi(x_\ell^{12})
<
\Phi(-b/2)
=
\frac{1+3b}{1+b}
\le
\frac53.
\]
Therefore, when $a\le b$ and $0<b\le1/2$, we have $\MC(s_0)\le(5/3)\OPT_{\MC}(\mathbf c)$.

\textbf{Case 4.} $1/2\le a\le b$. Then $s_0=\min\{1,a\}$. If $a\ge1$, then $s_0=1$, and every agent whose cost depends on the bridge position attains her minimum cost. Hence, the output is optimal. Otherwise, $s_0=a$, and this case is symmetric to Case 3. Hence, $\MC(s_0)\le(5/3)\OPT_{\MC}(\mathbf c)$.

\textbf{Case 5.} $b\le1/2\le a$. Then $s_0=1/2$. In this case, all agents in $A$ and $B$ whose costs depend on the bridge position attain their minimum costs. Together with the fixed term $\MC_0$, the output is optimal.

\textbf{Case 6.} $b\le a\le1/2$. Then $s_0=\max\{0,a\}$. If $a\le0$, then $s_0=0$; if $0<a\le1/2$, then $s_0=a$. In both cases, all agents in $A$ and $B$ whose costs depend on the bridge position attain their minimum costs. Hence, the output is optimal.

\textbf{Case 7.} $1/2\le b\le a$. Then $s_0=\min\{1,b\}$. This case is symmetric to Case 6. Hence, the output is optimal.

Combining all cases, we have $\MC(s_0)\le(5/3)\OPT_{\MC}(\mathbf c)$.

The tight example following Theorem 6 also applies here, since every agent in
that instance is interested in exactly one facility.

\qed
\end{proof}

\subsection*{Proof of Lemma~\ref{lem:minmax-control-lb}.}
\begin{proof}
We prove the first statement. If $s_0=0$, the claim is clear. Assume $s_0>0$. Since $x_i<s_0$, we have $A\ne\varnothing$. If $B=\varnothing$, then the mechanism outputs $0$, a contradiction. Hence $A\ne\varnothing$ and $B\ne\varnothing$, and $s_0=\med\{0,1,\med\{a,b,1/2\}\}$. Since $a\le x_i<s_0$, we have $a<s_0$. If $s_0=1$, then $\med\{a,b,1/2\}\ge1$, which is impossible because $a<1$ and $1/2<1$. Thus $s_0<1$, and so $s_0=\med\{a,b,1/2\}$. Since $a<s_0$, the median relation gives $b\ge s_0$ and $s_0\le1/2$. Let $j_b\in B$ satisfy $y_{j_b}=b$.

Take any bridge position $s\in[0,1]$. If $s>s_0$, then $x_i<s_0<s$. If $i\in M_2$, then $C_i^{\mathrm U}(s)=2s-x_i>s_0$. If $i\in M_{12}$, then $x_i<s_0\le1/2$, so $1-x_i\ge s_0$, and $|x_i-s|+s=2s-x_i>s_0$. Hence $C_i^{\mathrm U}(s)\ge s_0$, and thus $\MC(s)\ge s_0$.

If $s\le s_0$, then $s\le s_0\le b$. If $j_b\in N_1$, then $C_{j_b}^{\mathrm L}(s)=b+1-2s\ge b\ge s_0$. If $j_b\in N_{12}$, then $s_0\le1/2$ and $b\ge s_0$, so her min-variant cost is also at least $s_0$. Hence $\MC(s)\ge s_0$. Therefore $\MC(s)\ge s_0$ for every $s\in[0,1]$, and $\OPT_{\MC}(\mathbf c)\ge s_0$.

The second statement follows by a symmetric argument.
\qed
\end{proof}

\subsection*{Proof of Theorem~\ref{appstmt:12}.}
\begin{proof}
We first prove strategyproofness. By Remark~\ref{rem:outside-constant}, bridge-independent agents cannot benefit from misreporting. In addition, agents $i\in M_{12}$ with $1/2\le x_i<1$ and agents $j\in N_{12}$ with $0<y_j\le1/2$ have costs independent of the bridge position and hence cannot benefit from misreporting. For all remaining agents, the individual cost has the same monotonicity with respect to the bridge position as in the proof of Theorem~\ref{appstmt:7}. Since Mechanism 4 uses the same median rule, the strategyproofness argument is analogous, and we omit the details.

We next analyze the approximation ratio.

Given an instance $\mathbf c$, let $s_0$ be the output of Mechanism 4, and let $s^*$ be an optimal bridge position. By Remark~\ref{rem:outside-constant}, bridge-independent agents have the same cost at $s_0$ and at $s^*$. Hence, if the maximum cost at $s_0$ is attained by a bridge-independent agent, then $\MC(s_0)\le\OPT_{\MC}(\mathbf c)$, and the ratio is at most $1$. Therefore, we only need to consider agents whose costs may depend on the bridge position. If one of $A$ and $B$ is empty, the endpoint output is optimal. Thus assume $A\ne\varnothing$ and $B\ne\varnothing$.

We first give some lower bounds on $\OPT_{\MC}(\mathbf c)$ for individual agents. For $i\in M_2$, $C_i^{\mathrm U}(s^*)=|x_i-s^*|+s^*\ge |x_i|$, and hence $\OPT_{\MC}(\mathbf c)\ge |x_i|$. For $i\in M_{12}$, $C_i^{\mathrm U}(s^*)=\min\{|x_i-1|,|x_i-s^*|+s^*\}\ge \min\{|x_i-1|,|x_i|\}$, and hence $\OPT_{\MC}(\mathbf c)\ge\min\{|x_i-1|,|x_i|\}$. Similarly, for $j\in N_1$, $\OPT_{\MC}(\mathbf c)\ge |1-y_j|$, and for $j\in N_{12}$, $\OPT_{\MC}(\mathbf c)\ge\min\{|y_j|,|1-y_j|\}$.

Consider agents on $L_1$. If $i\in M_2$, then $C_i^{\mathrm U}(s_0)=|x_i-s_0|+s_0$. If $x_i\ge s_0$, then $C_i^{\mathrm U}(s_0)=x_i=|x_i|\le\OPT_{\MC}(\mathbf c)$. If $x_i<s_0$, then by Lemma~\ref{lem:minmax-control-lb}, $\OPT_{\MC}(\mathbf c)\ge s_0$. Also, by the lower bound above, $\OPT_{\MC}(\mathbf c)\ge |x_i|$. Thus
\[
C_i^{\mathrm U}(s_0)=2s_0-x_i\le2s_0+|x_i|\le3\OPT_{\MC}(\mathbf c).
\]

If $i\in M_{12}$, then $C_i^{\mathrm U}(s_0)=\min\{|x_i-1|,|x_i-s_0|+s_0\}$. If $x_i\ge s_0$, then $C_i^{\mathrm U}(s_0)\le\min\{|x_i-1|,|x_i|\}\le\OPT_{\MC}(\mathbf c)$. If $x_i<s_0$, then by Lemma~\ref{lem:minmax-control-lb}, $\OPT_{\MC}(\mathbf c)\ge s_0$. The proof of Lemma~\ref{lem:minmax-control-lb} also shows that $s_0\le1/2$. Hence $x_i<1/2$, and by the lower bound above, $\min\{|x_i-1|,|x_i|\}=|x_i|\le\OPT_{\MC}(\mathbf c)$. Therefore,
\[
C_i^{\mathrm U}(s_0)\le |x_i-s_0|+s_0=2s_0-x_i\le2s_0+|x_i|\le3\OPT_{\MC}(\mathbf c).
\]

The argument for agents on $L_2$ is symmetric. By the lower bounds above and Lemma~\ref{lem:minmax-control-lb}, for every $j\in N_1\cup N_{12}$, we have $C_j^{\mathrm L}(s_0)\le3\OPT_{\MC}(\mathbf c)$.

Therefore every agent has cost at most $3\OPT_{\MC}(\mathbf c)$ at the output $s_0$, and hence $\MC(s_0)\le3\OPT_{\MC}(\mathbf c)$.

\noindent\textbf{Tight Example.}
Consider an instance with two agents. Agent $i\in M_2$ is on $L_1$ at $x_i=-1/2$, and agent $j\in N_{12}$ is on $L_2$ at $y_j=1/2$. Then $A=\{i\}$, $B=\{j\}$, $a=-1/2$, and $b=1/2$. Mechanism 4 outputs $s_0=\med\{0,1,\med\{a,b,1/2\}\}=1/2$. At $s_0=1/2$, $C_i^{\mathrm U}(s_0)=3/2$ and $C_j^{\mathrm L}(s_0)=1/2$, so $\MC(s_0)=3/2$. At $s=0$, both agents have cost $1/2$, so $\OPT_{\MC}(\mathbf c)\le1/2$. Also, for every $s\in[0,1]$, agent $i$'s cost is $C_i^{\mathrm U}(s)=1/2+2s\ge1/2$, and hence $\OPT_{\MC}(\mathbf c)=1/2$. Therefore, $\MC(s_0)/\OPT_{\MC}(\mathbf c)=3$.
\qed
\end{proof}

\end{document}